\theoremstyle{plain}
\newtheorem{theorem}{Theorem}[section]
\newtheorem{lemma}[theorem]{Lemma}
\theoremstyle{remark}
\begin{document}

\begin{frontmatter}

\title{Conjugate Bayesian analysis of compound-symmetric Gaussian models}
\runtitle{Bayesian Compound Symmetric Gaussian Models}


\author{\fnms{Zachary M.} \snm{Pisano}\corref{}\ead[label=e1]{zpisano1@jhu.edu}}
\address{Department of Applied Mathematics and Statistics\\ Johns Hopkins University\\ \printead{e1}}

\runauthor{ZM Pisano}

\begin{abstract}
    We discuss Bayesian inference for a known-mean Gaussian population with a compound symmetric variance-covariance matrix. Since the space of such matrices is a linear subspace of that of positive definite matrices, we utilize the methods of \cite{Pisano-diss} to decompose the usual Wishart conjugate prior and derive a closed-form, bivariate conjugate prior distribution for the distinct entries of the compound-symmetric half-precision matrix. One samples from this density by transforming independent gamma random variables into the cone of possible entries of such matrices. We also demonstrate how the prior may be utilized to naturally test for the positivity of a common within-class correlation in a random-intercept model using two data-driven examples.
\end{abstract}

\begin{keyword}[class=MSC]
\kwd[Primary ]{62H05}
\kwd[; secondary ]{62F15}
\end{keyword}

\begin{keyword}
\kwd{Conjugate prior}
\kwd{exponential family}
\kwd{hierarchical model}
\kwd{random intercept model}
\end{keyword}



\end{frontmatter}

\section{Introduction}\label{sec:intro}
In parametric Bayesian inference conjugate prior distributions are comprised of equal parts practical utility and mathematical beauty: the former, in so far as such distributions' structure may greatly mitigate issues related to posterior computation and model selection; the latter, in so far as derivations demonstrating the conjugacy of individual priors may be satisfyingly and succinctly described via a simple rule by which the likelihood updates the hyperparameters \cite{Raiffa-1961}. Conjugate priors are guaranteed when the likelihood belongs to a full-rank exponential family in canonical form \cite{Diaconis-1979}, but a few non-exponential family examples (e.g., the Uniform-Pareto model, \cite{Fink1997ACO}, Sec. 2.5) have been catalogued as well. Due to the relatively limited number of likelihood settings which permit conjugate priors, Bayesian practitioners often implement priors which necessitate the use of Markov Chain Monte Carlo (MCMC) methods to approximate corresponding posteriors. While a few conjugate prior settings are ubiquitous and well-studied --- e.g., the Bayesian Gaussian model is the subject of an entire chapter in \cite{Hoff-2009} --- unless a likelihood possesses an obvious conjugate prior, MCMC methods are the norm, not the exception.

In this article we illustrate how defaulting to inherently computational methods --- thereby bypassing mathematical and analytical methods --- may result in a conjugate prior's being overlooked. In particular, we demonstrate how the distribution of a simple linear transformation of two independent gamma variables is conjugate for the unique entries a Gaussian model's precision matrix when this matrix is assumed to be compound symmetric. Throughout the article we assume
\begin{equation}\label{eq:model}
    \boldsymbol X_i\overset{i.i.d.}{\sim} \mathcal N_d\bigg(\boldsymbol\mu, \frac12 \boldsymbol{\mathcal H}^{-1}\bigg), \ \ i = 1,\dots,n
\end{equation}
where the $d\times d$ positive definite matrix $\boldsymbol{\mathcal H}$ denotes the half-precision, the canonical parameter for such a setting. Furthermore, we assume that $\boldsymbol{\mathcal H}$ is \emph{compound-symmetric}, i.e., structured such that
\begin{equation*}
    \boldsymbol{\mathcal H} = (\eta_1-\eta_2)\textbf I_d + \eta_2\boldsymbol 1_d \boldsymbol 1_d^\top
\end{equation*}
where $\eta_1 > 0, \ \eta_2 \in (-\frac{\eta_1}{d-1}, \eta_1)$, $\textbf{I}_d$ denotes the $d\times d$ identity matrix, and $\textbf 1_d$ denotes the $d$-vector of all ones \cite{Witkov}. Since it can be shown that the inverse of a compound-symmetric matrix is also compound symmetric (Appendix \ref{prelims}), a Gaussian model with compound-symmetric precision prescribes equal variance along each coordinate and equal correlation between each pair of coordinates; i.e., a Gaussian model with a compound-symmetric half-precision also possesses a compound-symmetric variance-covariance matrix
\begin{equation*}
    \boldsymbol\Sigma = (\sigma_1-\sigma_2)\textbf I_d + \sigma_2\boldsymbol 1_d \boldsymbol 1_d^\top
\end{equation*}
where $\sigma_1 > 0, \ \sigma_2 \in (-\frac{\sigma_1}{d-1}, \sigma_1)$.

Compound symmetry typically arises in linear models with repeated measurements, such as may be found in random intercept models within the study of medical treatments \cite{Demidenko-2004}, econometrics \cite{Maddala-1987}, and the effectiveness of educational methods \cite{Mulder-2019}. Bayesian inference for the compound-symmetric case has heretofore been conducted without closed-form conjugacy in mind. \cite{Spieg-2001} discussed a collection of priors for the intra-coordinate-correlation coefficient $\frac{\sigma_2}{\sigma_1}$, including a standard beta distribution on the unit interval which ignored the possibility that this parameter may take negative values. \cite{Mulder-2013} introduced a joint gamma-uniform prior which both permitted negative values of the intra-coordinate-covariance $\sigma_2$ and ensured posterior support on the space of positive definite compound symmetric matrices. More recently \cite{Mulder-2019} generalized the latter to a conditionally conjugate shifted/scaled beta distribution, and also discussed an appropriate prior for the noninformative case. Markov chain Monte Carlo methods were used both to sample from these priors and to approximate the posteriors; while useful in practice, such methods are inherently computationally complex.

Despite the development of such computationally intensive and sophisticated methods, it is nonetheless gratifying that we are able to fully express, in closed form, a conjugate prior for the precision matrix of a compound-symmetric Gaussian model. This prior possesses the mathematical convenience and beauty mentioned previously, as well as the benefits of being both interpretable from a data standpoint and robust to the addition of new data. Moreover, this prior also lends itself quite straightforwardly to the generation and simulation of compound symmetric matrices, as it may be derived as the distribution of a linear transformation of two independent gamma variates (Theorem \ref{thm:cs_prec_prior}). The same linear transformation applied to two independent inverse-gamma variates yields a conjugate prior for the entries of the variance-covariance matrix (Theorem \ref{thm:cs_var_prior}). All of this is explored in Section \ref{sec:primary_results}.

In Section \ref{sec:exp_fam} we explore an alternative means of deriving our primary result, which necessitates characterizing (\ref{eq:model}) as a fully rank-2 exponential family in canonical form, i.e., the general setting described in \cite{Diaconis-1979}. The model (\ref{eq:model}) can in fact be expressed as a linearly nested submodel of a Gaussian model with arbitrarily positive definite precision. These two facts in tandem lead us to invoke a result found in Chapter 3 of \cite{Pisano-diss}, which prescribes the structure of conjugate priors for such models. Here, we also derive the conditional prior of $\eta_2 \ | \ \eta_1$ and the marginal prior of $\eta_1$. The former is found to be a shifted/scaled Kummer-Beta distribution, whereas the latter is found to be a convolved-gamma distribution; in so doing we recover the exact form of the convolved-gamma density previously derived by \cite{DiSalvo-2006}.

Section \ref{sec:unknown_mean} extends the results of the previous sections to the unknown mean case. We prescribe a hierarchical prior in which the entries of the half-precision $\boldsymbol{\mathcal H}$ are \textit{a priori} distributed according to the conjugate prior for the known-mean case, and the mean vector given the half-precision has a multivariate Gaussian distribution \textit{\`a la} the usual normal-Wishart conjugate prior for general Gaussian model (Theorem \ref{thm:prec_prior_unknown_mean}). We prescribe an analogous conjugate hierarchical prior for the mean vector and the entries of the variance-covariance matrix $\boldsymbol\Sigma = \frac{1}{2}\boldsymbol{\mathcal H}^{-1}$ (Theorem \ref{thm:var_prior_unknown_mean}).

In Section \ref{sec:h_tests_for_cs} we demonstrate the priors' utility in testing for the postivity of the off-diagonal entry of $\boldsymbol\Sigma = \frac{1}{2}\boldsymbol{\mathcal H}^{-1}$. Such a test permits one to determine whether data arising from a compound symmetric setting may be more suitably modeled by the more restrictive random-intercept model \cite{Raudenbush-2002}. While this is most easily accomplished when the data consist entirely of vectors of equal length, the Expectation-Maximization (EM) algorithm \cite{DLR} permits us to extend the framework to data consisting of vectors of varying lengths. We conclude the body of the article with a discussion in Section \ref{sec:discussion}, and relegate to the appendices supplementary material detailing proofs of the various theorems (Appendices \ref{sec:proofs_of_theorems} and \ref{app:details}), the properties of compound symmetric matrices (Appendix \ref{prelims}) and the applicable EM algorithm (Appendix \ref{app:EM_for_RI}).


\section{Primary Results}\label{sec:primary_results}

Suppose we are in the setting (\ref{eq:model}) in which $\boldsymbol{\mathcal H} = (\eta_1-\eta_2)\textbf I_d + \eta_2\boldsymbol 1_d \boldsymbol 1_d^\top$, with the additional simplifying assumption that $\boldsymbol\mu = \textbf{0}$. Invocation of Lemma \ref{mat_det_lem} leads to us writing the compound-symmetric Gaussian likelihood as 
\begin{align*}
    L(\eta_1, \eta_2) = \pi^{-\frac{nd}{2}} &(\eta_1-\eta_2)^{\frac{n(d-1)}{2}}(\eta_1 + (d-1)\eta_2)^{\frac{n}{2}}\\
    &\times\exp\bigg\{-\eta_1\text{tr}(\boldsymbol s_n) - \eta_2\big(\boldsymbol1^\top \boldsymbol s_n\boldsymbol1-\text{tr}(\boldsymbol s_n)\big)\bigg\}
\end{align*}
where $\boldsymbol s_n = \sum_{i=1}^n \boldsymbol x_i\boldsymbol x_i^\top$. Based on this likelihood, a conjugate prior for $(\eta_1, \eta_2)$ up to a normalization constant should possess the form
\begin{equation}\label{eq:prior_form}
    (\eta_1-\eta_2)^{\theta_1}(\eta_1 + (d-1)\eta_2)^{\theta_2} \exp\{-\eta_1\theta_3-\eta_2\theta_4\}
\end{equation}
where $\theta_1,\dots,\theta_4$ denote hyperparameters. We observe that such a prior would indeed be conjugate, since the likelihood would update the hyperparameters such that
\begin{align*}
    \theta_1 &\mapsto \theta_1 + \frac{n(d-1)}{2}\\
    \theta_2 &\mapsto \theta_2 + \frac{n}{2}\\
    \theta_3 &\mapsto \theta_3 + \text{tr}(\boldsymbol s_n)\\
    \theta_4 &\mapsto \theta_4 + \big(\boldsymbol1^\top \boldsymbol s_n\boldsymbol1-\text{tr}(\boldsymbol s_n)\big).
\end{align*}
Moreover, this conjugate prior should have support on the two-dimensional space of compound-symmetric positive definite matrices which forms the cone 
\begin{equation*}
    \mathcal C_d := \bigg\{(\eta_1,\eta_2)\ : \ \eta_1>0, \frac{-\eta_1}{d-1}<\eta_2<\eta_1\bigg\}
\end{equation*}
when projected onto $\mathbb R$. This cone is depicted in Figure \ref{fig:cs_cone}.
\FloatBarrier
\begin{figure}
    \centering
    \includegraphics[width=.5\linewidth]{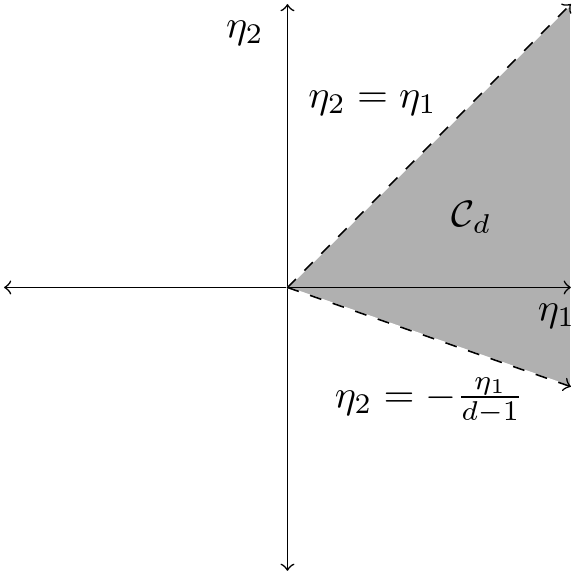}
    \caption{Two-dimensional representation of the space of compound-symmetric positive definite matrices.}
    \label{fig:cs_cone}
\end{figure}

\subsection{A Conjugate Prior for Compound-Symmetric Precision}
We observe that $\mathcal C_d$ is isomorphic to the first quadrant $\mathbb R^2_{>0}$. Hence, one possible means of deriving a proper prior for the compound-symmetric Gaussian model would be to first define a distribution supported on the first quadrant and then perform change-of-variables based around a mapping from $\mathbb R^2_{>0}$ to $\mathcal{C}_d$. Since the conjugate prior for the single precision parameter of a univariate Gaussian model is the gamma distribution \cite{Fink1997ACO}, we first define two independent (and not necessarily identically distributed) gamma variates $Y_j \sim \Gamma(\alpha_j,\lambda_j)\footnote{Throughout, we use the shape/rate formulation of gamma random variables.}, \ j = 1, 2$. As each $Y_j$ has strictly positive support, the pair $(Y_1, Y_2)$ is jointly supported on $\mathbb R^2_{>0}$. A simple transformation mapping elements of $\mathbb R^2_{>0}$ to elements of $\mathcal C_d$ is the linear transformation defined by
\begin{equation}\label{eq:lin_trans}
    \textbf{C}_d = \begin{bmatrix}
                        1 &1\\
                        1 &-\frac{1}{d-1}
                   \end{bmatrix}.
\end{equation}
That this linear transformation maps elements of $\mathbb R^2_{>0}$ into $\mathcal C_d$ can be seen by the fact that if $Y_1, Y_2 > 0$ we have
\begin{equation*}
    Y_1+Y_2>0 
\end{equation*}
and
\begin{equation*}
    -\frac{(Y_1+Y_2)}{d-1} < Y_1-\frac{Y_2}{d-1} < Y_1+Y_2.
\end{equation*}
Moreover the inverse transformation
\begin{equation*}
    \textbf{C}_d^{-1} = \begin{bmatrix}
        \frac{1}{d} &\frac{d-1}{d}\\ \frac{d-1}{d}& -\frac{d-1}{d}
        \end{bmatrix};
\end{equation*}
maps elements of $\mathcal C_d$ into $\mathbb R^2_{>0}$, since $\boldsymbol\eta\in \mathcal C_d$ implies
\begin{align*}
    0  &<\eta_1\\
    -\frac{\eta_1}{d-1} &< \eta_2 < \eta_1;
\end{align*}
the first inequality in the second line implies
\begin{equation*}
    0 < \frac{1}{d}\big(\eta_1 -(d-1)\eta_2\big),
\end{equation*}
and the second inequality implies
\begin{equation*}
    0<\bigg(\frac{d-1}{d}\bigg)(\eta_1-\eta_2)
\end{equation*}
as desired.

It so happens that all this gives us exactly what we want. We now arrive at our main result.
\begin{theorem}\label{thm:cs_prec_prior}
    If $Y_j \sim \Gamma(\alpha_j,\lambda_j),\ j = 1,2$, and $\textbf{C}_d = \begin{bmatrix}
                        1 &1\\
                        1 &-\frac{1}{d-1}
                   \end{bmatrix}$,
    then the distribution of $\textbf{C}_d\boldsymbol Y$ forms a conjugate prior for $\boldsymbol\eta$ in (\ref{eq:model}), with hyperparameter updates
    \begin{align*}
        \alpha_1 &\mapsto \alpha_1 +\frac{n(d-1)}{2}\\
        \alpha_2 &\mapsto \alpha_2 +\frac{n}{2}\\
        \lambda_1 &\mapsto \lambda_1 + \boldsymbol1^\top\boldsymbol s_n\boldsymbol1\\
        \lambda_2 &\mapsto \lambda_2 + \frac{d\text{tr}(\boldsymbol s_n)-\boldsymbol1^\top\boldsymbol s_n\boldsymbol1}{d-1}
    \end{align*}.
\end{theorem}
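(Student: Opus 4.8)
The plan is to derive the density of $\textbf{C}_d \boldsymbol Y$ by a change of variables, show it has the conjugate form (\ref{eq:prior_form}), and then read off the hyperparameter updates by matching against the likelihood $L(\eta_1,\eta_2)$. First I would write the joint density of the independent gamma pair as
\begin{equation*}
    f_{\boldsymbol Y}(y_1,y_2) \propto y_1^{\alpha_1-1} y_2^{\alpha_2-1} \exp\{-\lambda_1 y_1 - \lambda_2 y_2\}, \qquad y_1,y_2>0,
\end{equation*}
and then apply the linear change of variables $\boldsymbol\eta = \textbf{C}_d \boldsymbol Y$, i.e. $\boldsymbol Y = \textbf{C}_d^{-1}\boldsymbol\eta$, whose Jacobian is the constant $|\det \textbf{C}_d^{-1}| = \frac{1}{|\det \textbf{C}_d|}$ and hence absorbs into the normalization constant. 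Using the explicit form of $\textbf{C}_d^{-1}$ already recorded in the excerpt, one has $y_1 = \tfrac{1}{d}(\eta_1 + (d-1)\eta_2)$ and $y_2 = \tfrac{d-1}{d}(\eta_1 - \eta_2)$. Substituting these in and collecting the constant factors $d^{-\alpha_1}$ and $\big(\tfrac{d-1}{d}\big)^{\alpha_2}$ into the normalizer yields a density on $\mathcal C_d$ proportional to
\begin{equation*}
    (\eta_1 - \eta_2)^{\alpha_2 - 1}\,(\eta_1 + (d-1)\eta_2)^{\alpha_1 - 1}\, \exp\Big\{ -\tfrac{\lambda_1}{d}(\eta_1 + (d-1)\eta_2) - \tfrac{\lambda_2(d-1)}{d}(\eta_1 - \eta_2)\Big\}.
\end{equation*}

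Next I would expand the exponent in powers of $\eta_1$ and $\eta_2$ to confirm this matches the conjugate template (\ref{eq:prior_form}): the coefficient of $-\eta_1$ is $\theta_3 = \tfrac{\lambda_1}{d} + \tfrac{\lambda_2(d-1)}{d} = \tfrac{\lambda_1 + (d-1)\lambda_2}{d}$ and the coefficient of $-\eta_2$ is $\theta_4 = \tfrac{(d-1)\lambda_1}{d} - \tfrac{\lambda_2(d-1)}{d} = \tfrac{(d-1)(\lambda_1 - \lambda_2)}{d}$, while $\theta_1 = \alpha_2 - 1$ and $\theta_2 = \alpha_1 - 1$. Since the exponents and the linear form in $(\eta_1,\eta_2)$ are exactly the functional ingredients appearing in $L(\eta_1,\eta_2)$ (with the same $(\eta_1-\eta_2)$, $(\eta_1+(d-1)\eta_2)$, $\eta_1$, and $\eta_2$ pieces), multiplying prior by likelihood preserves the form, which establishes conjugacy. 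Moreover the support is automatically $\mathcal C_d$ by the mapping properties of $\textbf{C}_d$ established just before the theorem statement, so no positive-definiteness issue arises in the posterior.

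Finally, to extract the stated updates, I would multiply the displayed prior density by $L(\eta_1,\eta_2)$ and track how each of $\alpha_1,\alpha_2,\lambda_1,\lambda_2$ changes. The exponent terms $(d-1)^{-1}\cdot\big(\text{stuff}\big)$ force me to express the likelihood's $\exp\{-\eta_1 \mathrm{tr}(\boldsymbol s_n) - \eta_2(\boldsymbol 1^\top \boldsymbol s_n \boldsymbol 1 - \mathrm{tr}(\boldsymbol s_n))\}$ back in terms of the transformed rate parameters; equating $\tfrac{\lambda_1^{\text{new}} + (d-1)\lambda_2^{\text{new}}}{d} = \tfrac{\lambda_1 + (d-1)\lambda_2}{d} + \mathrm{tr}(\boldsymbol s_n)$ and $\tfrac{(d-1)(\lambda_1^{\text{new}} - \lambda_2^{\text{new}})}{d} = \tfrac{(d-1)(\lambda_1 - \lambda_2)}{d} + \big(\boldsymbol 1^\top \boldsymbol s_n \boldsymbol 1 - \mathrm{tr}(\boldsymbol s_n)\big)$ and solving the resulting $2\times 2$ linear system gives $\lambda_1 \mapsto \lambda_1 + \boldsymbol 1^\top \boldsymbol s_n \boldsymbol 1$ and $\lambda_2 \mapsto \lambda_2 + \tfrac{d\,\mathrm{tr}(\boldsymbol s_n) - \boldsymbol 1^\top \boldsymbol s_n \boldsymbol 1}{d-1}$; the shape updates $\alpha_1 \mapsto \alpha_1 + \tfrac{n(d-1)}{2}$ and $\alpha_2 \mapsto \alpha_2 + \tfrac{n}{2}$ follow immediately from matching the powers of $(\eta_1 + (d-1)\eta_2)$ and $(\eta_1 - \eta_2)$ against the $\tfrac{n(d-1)}{2}$ and $\tfrac{n}{2}$ exponents in $L$. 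The only real bookkeeping hazard — and the step I expect to be the main obstacle — is keeping the $d$ versus $d-1$ factors straight when inverting the linear map on the rate parameters; the shape-parameter side is essentially immediate once the density is in the form above.
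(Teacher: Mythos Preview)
Your approach is correct and is essentially the paper's own proof: apply the linear change of variables $\boldsymbol Y=\textbf{C}_d^{-1}\boldsymbol\eta$ to the product-gamma density and read off the conjugate template (\ref{eq:prior_form}) with $\theta_1=\alpha_2-1$, $\theta_2=\alpha_1-1$, $\theta_3=(\lambda_1+(d-1)\lambda_2)/d$, $\theta_4=(d-1)(\lambda_1-\lambda_2)/d$; the paper's proof stops at exactly this identification, and your extra $2\times2$ solve for the rate updates is correct additional detail not spelled out there.

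One small bookkeeping slip in your final sentence: with your own identifications $\theta_1=\alpha_2-1$ and $\theta_2=\alpha_1-1$, and since in $L(\eta_1,\eta_2)$ the factor $(\eta_1-\eta_2)$ carries exponent $\tfrac{n(d-1)}{2}$ while $(\eta_1+(d-1)\eta_2)$ carries exponent $\tfrac{n}{2}$, the power-matching actually yields $\alpha_2\mapsto\alpha_2+\tfrac{n(d-1)}{2}$ and $\alpha_1\mapsto\alpha_1+\tfrac{n}{2}$, the reverse of what you wrote. This is consistent with the parameterization used later in Section~\ref{sec:exp_fam} (where $Y_1\sim\Gamma(\tfrac{m+2}{2},\cdot)$ and $Y_2\sim\Gamma(\tfrac{m(d-1)+2}{2},\cdot)$ under $m\mapsto m+n$) and in Theorem~\ref{thm:prec_prior_unknown_mean}, so the two shape updates in the theorem statement appear to be interchanged as printed; your derivation, carried through carefully, would detect this.
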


This result not only provides the exact functional form of the conjugate prior for $\boldsymbol\eta$ but also, by construction, provides a means by which this prior may be sampled. Therefore, any Bayesian hypothesis test of whether or not $\boldsymbol{\eta}$ lies within a specified region of $\mathcal C_d$ can be performed using elementary Monte Carlo methods conducted via built-in functions in statistical software packages (e.g., \textit{rgamma} in R).

Moreover, Theorem \ref{thm:cs_prec_prior} yields exact interpretability of the hyperparameters in terms of the data. The shape parameters $\alpha_1$ and $\alpha_2$ represent the volume of prior data corresponding to, respectively, estimates for the off-diagonal and diagonal entries of $\boldsymbol{ \mathcal H}$. Meanwhile, $\lambda_1$ and $\lambda_2$ convey information about the variance and covariance of prior data \textit{vis-\`a-vis} outer-products of data vectors. This interpretation is rendered clearer via the machinery of Section \ref{sec:exp_fam}.

\subsection{A Conjugate Prior for Compound-Symmetric Variance-Covariance}

Typically, the parameter of interest for Gaussian settings is the variance-covariance matrix $\boldsymbol\Sigma$ instead of the half-precision matrix $\boldsymbol{\mathcal H} = \frac{1}{2}\boldsymbol\Sigma^{-1}$, although the latter often takes greater precedence in the formulation of graphical models \cite{Lauritzen-1996}. Nonetheless, there is greater interest in Bayesian methods for the former; as mentioned in Section \ref{sec:intro}, Bayesian methods for compound-symmetric Gaussian models have thus far involved the construction of non-conjugate priors for the entries of $\boldsymbol\Sigma$ which either fail to ensure support on all of $\mathcal C_d$ or which require MCMC methods to perform posterior simulation or estimation \cite{Spiegelhalter-2002, Mulder-2013, Mulder-2019}.

The prior constructed in Theorem \ref{thm:cs_prec_prior} can be used as a stepping stone for that of compound-symmetric variance-covariance matrices. We can perform change-of-variables once again, using an invertible transformation motivated by Lemma \ref{cs_inverse_entries}, namely
\begin{align*}
    \sigma_1 &= \frac{\eta_1+(d-2)\eta_2}{2(\eta_1-\eta_2)(\eta_1+(d-1)\eta_2)}\\
    \sigma_2 &= -\frac{\eta_2}{2(\eta_1-\eta_2)(\eta_1+(d-1)\eta_2)}.
\end{align*}
In applying this change-of-variables to the conjugate prior for $\boldsymbol\eta$ obtained in Theorem \ref{thm:cs_prec_prior} we can obtain the exact analytical form of the conjugate prior for $\boldsymbol\sigma$. However, such a process might be more tedious than satisfying.

It turns out that we can derive a prior for $\boldsymbol\sigma$ as we did for $\boldsymbol\eta$ above. Writing the compound-symmetric Gaussian likelihood in terms of $\boldsymbol\sigma$ we have

\begin{align*}
    L(\sigma_1, \sigma_2) = (2\pi)^{-\frac{nd}{2}} &(\sigma_1-\sigma_2)^{-\frac{n(d-1)}{2}} (\sigma_1+(d-1)\sigma_2)^{-\frac{n}{2}}\\
    &\times \exp\bigg\{-\frac{1}{2} \bigg(\frac{\text{tr}(\boldsymbol s_n) - \frac{\boldsymbol{1}^\top\boldsymbol s_n \boldsymbol 1}{d}}{\sigma_1-\sigma_2} + \frac{\frac{\boldsymbol{1}^\top\boldsymbol s_n \boldsymbol 1}{d}}{\sigma_1 + (d-1)\sigma_2}\bigg)\bigg\}.
\end{align*}
The exponential term is obtained by expanding $\text{tr}(\boldsymbol\Sigma^{-1}\boldsymbol s_n)$ using Lemmata \ref{cs_trace_lemma} and \ref{cs_inverse_entries} and partial fraction decomposition. Based on this likelihood, a conjugate prior for $(\sigma_1, \sigma_2)$ up to a normalization constant should possess the form
\begin{equation}\label{eq:prior_form2}
    (\sigma_1-\sigma_2)^{-\theta_1} (\sigma_1+(d-1)\sigma_2)^{-\theta_2} \exp\bigg\{-\frac{\theta_3}{\sigma_1-\sigma_2} - \frac{\theta_4}{\sigma_1+(d-1)\sigma_2}\bigg\}
\end{equation}
where $\theta_1,\dots,\theta_4$ are hyperparameters as before. Such a prior would indeed be conjugate, since the likelihood would update the hyperparameters such that
\begin{align*}
    \theta_1 &\mapsto \theta_1+\frac{n(d-1)}{2}\\
    \theta_2 &\mapsto \theta_2+\frac{n}{2}\\
    \theta_3 &\mapsto \theta_3 + \frac{\text{tr}(\boldsymbol s_n) - \frac{\boldsymbol 1^\top \boldsymbol s_n\boldsymbol1}{d}}{2}\\
    \theta_4 &\mapsto \theta_4 + \frac{\boldsymbol 1^\top \boldsymbol s_n\boldsymbol1}{2d}.
\end{align*}
As before, this prior should have support on $\mathcal C_d$.

In the previous subsection we defined two independent gamma variates and linearly mapped them into $\mathcal C_d$ to obtain the conjugate prior for the compound-symmetric precision. A similar construction using independent inverse-gamma random variables gives us what we want.

\begin{theorem}\label{thm:cs_var_prior}
    Let $Z_j \sim \Gamma^{-1}(\alpha_j, \lambda_j), \ j = 1, 2$ be independent, and let $\textbf{C}_d$ be as in the previous theorem. The distribution of $\textbf{C}_d \boldsymbol Z$ forms a conjugate prior for $\boldsymbol\sigma$, with hyperparameter updates
    \begin{align*}
        \alpha_1 &\mapsto \alpha_1 + \frac{n}{2}\\
        \alpha_2 &\mapsto \alpha_2 + \frac{n(d-1)}{2}\\
        \lambda_1 &\mapsto \lambda_1 + \frac{\boldsymbol1^\top \boldsymbol s_n\boldsymbol 1}{2d^2}\\
        \lambda_2 &\mapsto \lambda_2 + \bigg(\frac{d-1}{2d^2}\bigg)\big(d\text{tr}(\boldsymbol s_n)-\boldsymbol1^\top \boldsymbol s_n\boldsymbol 1\big).
    \end{align*}
\end{theorem}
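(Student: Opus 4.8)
The plan is to mirror the construction behind Theorem~\ref{thm:cs_prec_prior}: write down the joint density of the two independent inverse-gamma variates, push it forward through the linear map $\textbf{C}_d$, and verify that the result is exactly the putative conjugate form (\ref{eq:prior_form2}). Since $\textbf{C}_d$ has already been shown to carry $\mathbb R^2_{>0}$ bijectively onto $\mathcal C_d$ (with explicit inverse $\textbf{C}_d^{-1}$), the pushforward is automatically a proper probability density supported on the correct cone, so the only substantive work is matching functional forms and then translating the generic hyperparameter updates attached to (\ref{eq:prior_form2}) into updates on $(\alpha_1,\alpha_2,\lambda_1,\lambda_2)$.

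First I would record that $Z_j\sim\Gamma^{-1}(\alpha_j,\lambda_j)$ has density proportional to $z_j^{-\alpha_j-1}\exp\{-\lambda_j/z_j\}$, so by independence the joint density of $(Z_1,Z_2)$ on $\mathbb R^2_{>0}$ is proportional to $z_1^{-\alpha_1-1}z_2^{-\alpha_2-1}\exp\{-\lambda_1/z_1-\lambda_2/z_2\}$. Next I would apply the change of variables $\boldsymbol\sigma=\textbf{C}_d\boldsymbol Z$, whose Jacobian is the constant $|\det\textbf{C}_d|^{-1}=\tfrac{d-1}{d}$ and hence absorbs into the normalizing constant. Substituting $Z_1=\tfrac1d\big(\sigma_1+(d-1)\sigma_2\big)$ and $Z_2=\tfrac{d-1}{d}(\sigma_1-\sigma_2)$ (read off from $\textbf{C}_d^{-1}$), the power factors become, up to constants, $\big(\sigma_1+(d-1)\sigma_2\big)^{-\alpha_1-1}$ and $(\sigma_1-\sigma_2)^{-\alpha_2-1}$, while the exponential term becomes $\exp\big\{-\tfrac{d\lambda_1}{\sigma_1+(d-1)\sigma_2}-\tfrac{d\lambda_2/(d-1)}{\sigma_1-\sigma_2}\big\}$. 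This is precisely the form (\ref{eq:prior_form2}) with $\theta_1=\alpha_2+1$, $\theta_2=\alpha_1+1$, $\theta_3=\tfrac{d\lambda_2}{d-1}$, and $\theta_4=d\lambda_1$, so conjugacy for $\boldsymbol\sigma$ follows from the conjugacy already established for (\ref{eq:prior_form2}) against the $\boldsymbol\sigma$-likelihood stated above.

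Finally I would read off the updates by pushing the generic rules $\theta_1\mapsto\theta_1+\tfrac{n(d-1)}{2}$, $\theta_2\mapsto\theta_2+\tfrac n2$, $\theta_3\mapsto\theta_3+\tfrac12\big(\text{tr}(\boldsymbol s_n)-\tfrac{\boldsymbol1^\top\boldsymbol s_n\boldsymbol1}{d}\big)$, $\theta_4\mapsto\theta_4+\tfrac{\boldsymbol1^\top\boldsymbol s_n\boldsymbol1}{2d}$ through this correspondence: the $+1$ offsets cancel in the shape updates, giving $\alpha_1\mapsto\alpha_1+\tfrac n2$ and $\alpha_2\mapsto\alpha_2+\tfrac{n(d-1)}{2}$; dividing the $\theta_4$ update by $d$ gives $\lambda_1\mapsto\lambda_1+\tfrac{\boldsymbol1^\top\boldsymbol s_n\boldsymbol1}{2d^2}$; and multiplying the $\theta_3$ update by $\tfrac{d-1}{d}$ gives $\lambda_2\mapsto\lambda_2+\tfrac{d-1}{2d^2}\big(d\,\text{tr}(\boldsymbol s_n)-\boldsymbol1^\top\boldsymbol s_n\boldsymbol1\big)$, matching the theorem. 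I would also note that since the shape shifts are strictly positive and the functional form is preserved, the posterior is again the law of $\textbf{C}_d\boldsymbol Z'$ for independent inverse-gammas with the updated parameters, so the family is genuinely closed under updating. The main obstacle is purely clerical: keeping the factors of $d$ and $d-1$, the rate-versus-scale convention, and the $\alpha_j+1$ exponent offsets consistent when moving between the $\theta$-parametrization of (\ref{eq:prior_form2}) and the $(\alpha_j,\lambda_j)$-parametrization. A direct change of variables from the $\boldsymbol\eta$-prior of Theorem~\ref{thm:cs_prec_prior} via the stated $\sigma$--$\eta$ map is possible but, as the excerpt observes, considerably messier, so I would avoid that route.
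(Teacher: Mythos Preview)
Your proposal is correct and follows essentially the same route as the paper's own proof: push the product of inverse-gamma densities through the linear map $\textbf{C}_d$, identify the result with the form~(\ref{eq:prior_form2}) via $\theta_1=\alpha_2+1$, $\theta_2=\alpha_1+1$, $\theta_3=\tfrac{d}{d-1}\lambda_2$, $\theta_4=d\lambda_1$, and conclude conjugacy. You also spell out the translation of the $\theta$-updates into the $(\alpha_j,\lambda_j)$-updates, which the paper leaves implicit; your algebra there is right.
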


As in the previous subsection, this result not only yields an exact functional form of the conjugate prior for $\boldsymbol\sigma$ but also provides a means which this prior may be sampled, and interpretation of the hyperparameters and their updates in terms of the prior and new data. Any Bayesian hypothesis test of whether or not $\boldsymbol\sigma$ lies in a specified region of $\mathcal C_d$ can be performed by elementary Monte Carlo methods conducted via built-in statistical software packages (e.g., \textit{rinvgamma} in R). We give an example of such a hypothesis test in Section \ref{sec:h_tests_for_cs}.



\section{Primary Results: Exponential Family Formulation}\label{sec:exp_fam}

In the previous section we derived the conjugate prior for the half-precision of a compound symmetric Gaussian model using an appropriate transformation of independent gamma random variables. This is to say nothing as to how we determined which transformation, or which random variables to transform, or even whether a conjugate prior could be derived at all. Due to the partial fraction decomposition needed to derive the functional form (\ref{eq:prior_form2}) we suspect that previous work due to \cite{Mulder-2013,Mulder-2019, Spieg-2001} may have led to the belief that such a prior could not be derived; indeed, this difficulty arises only if we consider a prior for the entries of the variance-covariance matrix. However, if one instead re-characterizes such a search for a conjugate prior as one for the entries of the half-precision, one may turn to longstanding results for exponential families in canonical form.

\subsection{Conjugate Priors of Linearly Nested Canonical Exponential Families}\label{sec:conjugate_priors_for_lin}

Consider an i.i.d. sample of $n$ random $d$-vectors from a canonical $k$-rank exponential family for which, in the notation of \cite{Bic-Dok} we write the marginal for the $i$-th observation $x_i$ as $f(x_i| \boldsymbol\theta ) = h(x_i)\exp\{\langle\boldsymbol\theta,\boldsymbol T(x_i)\rangle - A(\boldsymbol\theta)\}$, with base measure $h(x_i)$, canonical parameter $\boldsymbol\theta \in \mathbbm R^k$, sufficient statistic $\boldsymbol T: \mathbbm R^d\to\mathbbm R^k$, and log-partition $A: \boldsymbol\Theta \to \mathbbm R$, where $\boldsymbol\Theta = \{\boldsymbol\theta\in \mathbbm R^k\  :\ |A(\boldsymbol\theta)|<\infty\}$. The likelihood of these $n$ observations is then
 \begin{equation*}
     L(\boldsymbol\theta) = \bigg(\prod_{i=1}^nh(x_i)\bigg)\exp\{\langle\boldsymbol\theta,\sum_{i=1}^n\boldsymbol T(x_i)\rangle - nA(\boldsymbol\theta)\}.
 \end{equation*}
 
 Per \cite{Diaconis-1979} the natural conjugate prior for $\boldsymbol\theta$ takes the form
 \begin{equation*}
     \rho(\boldsymbol\theta) = H(\boldsymbol\tau, m)\exp\{\langle\boldsymbol\theta,\boldsymbol\tau\rangle - mA(\boldsymbol\theta)\},
 \end{equation*}
 supported on $\boldsymbol\Theta$, where the normalization constant
 \begin{equation}
     H(\boldsymbol\tau, m)^{-1} = \int_{\boldsymbol\Theta}\exp\{\langle\boldsymbol\theta,\boldsymbol\tau\rangle - mA(\boldsymbol\theta)\} \partial\boldsymbol\theta.
 \end{equation}
 The hyperparameters possess the interpretation of a prior sample of size $m\in \mathbbm R_+$ which yielded the prior expectation of sufficient statistic $\boldsymbol\tau$ in the convex support of $m\boldsymbol T(X)$. The update rule which defines the posterior diestribution of $\boldsymbol\theta\ | \ X_{1:n}$ is
 \begin{align*}
     \boldsymbol\tau &\mapsto \sum_{i=1}^n \boldsymbol T(X_i) + \boldsymbol\tau\\
     m &\mapsto n+m
 \end{align*}
 
 We also define a linearly nested submodel of rank $\ell < k$, induced by a rank-$\ell$ matrix $\textbf{M}\in \mathbbm R^{\ell\times k}$ and canonical parameter space $\boldsymbol{\mathcal E} := \{\boldsymbol\eta \in \mathbbm R^\ell : |A(\textbf{M}^\top\boldsymbol\eta)|<\infty\}$, with density $g(x_i|\boldsymbol\eta) = h(x_i)\exp\{\langle\textbf{M}^\top\boldsymbol\eta,\boldsymbol T(x_i)\rangle - A(\textbf{M}^\top\boldsymbol\eta)\}$. The image $\textbf{M}^\top(\boldsymbol{\mathcal E})$ is a linear subset of $\boldsymbol\Theta$. In such case the nested model is a fully rank-$\ell$ exponential family with base measure $h$, canonical parameter $\boldsymbol\eta\in\boldsymbol{\mathcal E}$, natural sufficient statistic $\textbf{M}\boldsymbol T(X)$ and log-partition $B(\boldsymbol{\eta}) := A(\textbf{M}^\top\boldsymbol\eta)$ (\cite{Bic-Dok}, problem 1.6.17). Because of this one can invoke the methods of \cite{Diaconis-1979} to define a conjugate prior for $\boldsymbol\eta$ as
 \begin{equation*}
     \rho_N(\boldsymbol\eta) = G(\boldsymbol\upsilon, w)\exp\{\langle\boldsymbol\eta,\boldsymbol\upsilon\rangle - B(\boldsymbol\eta)\}
 \end{equation*}
 Here the hyperparameters $\boldsymbol\upsilon$ and $w$ act as $\boldsymbol\tau$ and $m$ above. Similarly we have
 \begin{equation}
     G(\boldsymbol\upsilon,w)^{-1} = \int_{\boldsymbol{\mathcal E}}\exp\{\langle\boldsymbol\eta,\boldsymbol\upsilon\rangle - B(\boldsymbol\eta)\}\partial\boldsymbol\eta
 \end{equation}
 with the hyperparameter update rule
 \begin{align*}
     \boldsymbol\upsilon &\mapsto \textbf{M}\bigg(\sum_{i=1}^n \boldsymbol T(X_i)\bigg) + \boldsymbol\upsilon\\
     w &\mapsto n+w.
 \end{align*}
In the following subsection we will demonstrate that the compound-symmetric Gaussian model is in fact a linear submodel of the general model.

\subsection{Compound Symmetry as a Linear Submodel}

Let us return to the model (\ref{eq:model}). To this end let our ``full'' model be that with $\boldsymbol{\mathcal H}$ arbitrarily positive definite, in which case the joint density of $\boldsymbol X_i, i =1,\dots,n$ as a canonical exponential family may be characterized by
\begin{align*}
    h(\boldsymbol x_i) &= \pi^{-\frac{d}{2}}\\
    \textbf T(\boldsymbol x_i) &= -\boldsymbol x_i\boldsymbol x_i^\top\\
    \boldsymbol\theta &=\boldsymbol{\mathcal H}\\
    \langle\boldsymbol\theta,\textbf T(\boldsymbol x_i)\rangle &= \text{tr}(\boldsymbol{\mathcal H}\textbf T(\boldsymbol x_i))\\
    A(\boldsymbol\theta) &= -\frac{1}{2}\log |\boldsymbol{\mathcal H}|.
\end{align*}
For the purpose of more easily expressing the compound symmetric setting as a linear submodel, we rewrite the canonical parameter and sufficient statistics as vectors in $\mathbbm R^{\frac{d(d+1)}{2}}$ and the inner product as a dot product thereupon as in Chapter 3 of \cite{Pisano-diss}; in so doing we obtain
\begin{align*}
    \boldsymbol\theta &= \begin{bmatrix} \text{diag}(\boldsymbol{\mathcal{H}}) &\triangle(\boldsymbol{\mathcal{H}})
    \end{bmatrix}^\top\\
    \textbf T(\boldsymbol x_i) &= -\begin{bmatrix}
    \text{diag}(\boldsymbol x_i\boldsymbol x_i^\top) &2\triangle(\boldsymbol x_i\boldsymbol x_i^\top)
    \end{bmatrix}^\top
\end{align*}
where diag($\cdot$) and $\triangle(\cdot)$ return as vectors the diagonal and upper-triangle of their square-matrix-valued arguments. It is clear that $\boldsymbol\theta^\top\textbf T(\boldsymbol x_i) = -\text{tr}(\boldsymbol{\mathcal H}\boldsymbol x_i \boldsymbol x_i^\top)$, and that $\boldsymbol\Theta$ is the set of all $\boldsymbol\theta \in \mathbbm R^{\frac{d(d+1)}{2}}$ corresponding to positive definite $\boldsymbol{\mathcal H}$. The conjugate prior for such $\boldsymbol{\mathcal H}$ is Wishart \cite{Fink1997ACO}:
\begin{equation*}
    \rho(\boldsymbol{\mathcal H}) = \frac{|\textbf{B}|^{\frac{m+d+1}{2}}}{\Gamma_d(\frac{m+d+1}{2})}|\boldsymbol{\mathcal H}|^{\frac{m}{2}}\exp\{-\text{tr}(\textbf{B}\boldsymbol{\mathcal H})\}, \ \boldsymbol{\mathcal H}\ \text{is positive definite}
\end{equation*}
with $m>0$ and $\textbf{B}\in \mathbbm R^{d\times d}$ positive definite. We have elected to present this density in shape-rate form to more easily relate it to the analogous parameterization for the gamma density it generalizes.

Observe that for every compound symmetric $\boldsymbol{\mathcal H}$ there exists $\boldsymbol\eta = (\eta_1,\eta_2)\in \mathcal C_d$ such that $\boldsymbol{\mathcal H} = (\eta_1-\eta_2)\textbf I + \eta_2\boldsymbol1\boldsymbol1^\top$, with the corresponding
\begin{equation*}
    \boldsymbol\theta = \begin{bmatrix}\eta_1\boldsymbol1_d^\top &\eta_2\boldsymbol1_{\frac{d(d-1)}{2}}^\top \end{bmatrix}^\top.
\end{equation*}
One notes that the linear transformation $\boldsymbol\theta = \textbf{M}^\top\boldsymbol\eta$ is satisfied by the rank-2 matrix
\begin{equation*}
    \textbf{M} = \begin{bmatrix} \boldsymbol1_d &\boldsymbol0_d\\
    \boldsymbol0_{\frac{d(d-1)}{2}} &\boldsymbol1_{\frac{d(d-1)}{2}}
    \end{bmatrix}.
\end{equation*}
I.e., the compound symmetric model can be expressed as a linear submodel of the arbitrary positive definite model. Therefore, the conjugate prior for the smaller model up to normalization is
\begin{equation}\label{eq:prec_prior_not_norm}
    \exp\bigg\{-\text{tr}\big(((\eta_1-\eta_2)\textbf I + \eta_2\boldsymbol1\boldsymbol1^\top)\textbf B\big) + \frac{m}{2}\log\big|(\eta_1-\eta_2)\textbf I + \eta_2\boldsymbol1\boldsymbol1^\top\big|\bigg\}.
\end{equation}
We see that $m$ plays the part of $w$, and that a function of $\textbf B$ plays the part of $\boldsymbol\upsilon$ in the previous subsection. In determining the normalization constant we shall see that we only care about the trace and sum of the off-diagonals of $\textbf B$ as far as hyperparameterization may be concerned.

Using Lemma \ref{mat_det_lem} and the properties of the matrix trace, we observe that (\ref{eq:prec_prior_not_norm}) may be rewritten as 
\begin{equation*}
      (\eta_1-\eta_2)^{\frac{m(d-1)}{2}}(\eta_1 + (d-1)\eta_2)^{\frac{m}{2}}\exp\{-\eta_1\beta_1 - \eta_2\beta_2\}
\end{equation*}
where $\beta_1 := \text{tr}(\textbf B)$ and $\beta_2 := \boldsymbol1^\top\textbf B\boldsymbol1-\text{tr}(\textbf B)$; Lemma \ref{trace_inequality} implies $-\beta_1<\beta_2<(d-1)\beta_1$ and, subsequently,
\begin{align*}
    0 &< \beta_1 +\beta_2\\
    0 &< (d-1)\beta_1-\beta_2.
\end{align*}

Theorem \ref{thm:cs_prec_prior} next tells us that the exact functional form of the prior is
\begin{equation*}
    \rho(\boldsymbol{\eta}) = Z_d(m,\beta_1,\beta_2)(\eta_1-\eta_2)^{\frac{m(d-1)}{2}} (\eta_1 + (d-1)\eta_2)^{\frac{m}{2}}\exp\{-\beta_1\eta_1-\beta_2\eta_2\}
\end{equation*}
with the normalization constant
\begin{equation}\label{eq:cs_prior_norm_cons}
    Z_d(m,\beta_1,\beta_2) := \frac{(\beta_1+\beta_2)^{\frac{m+2}{2}} ((d-1)\beta_1-\beta_2)^{\frac{m(d-1)+2}{2}}}{d^{\frac{md+2}{2}} \Gamma(\frac{m(d-1)+2}{2})\Gamma(\frac{m+2}{2})}.
\end{equation}
We can sample from $\rho(\boldsymbol\eta)$ by first sampling the independent variates $Y_1 \sim\Gamma(\frac{m+2}{2},\beta_1+\beta_2)$ and $Y_2 \sim \Gamma(\frac{m(d-1)+2}{2}, \beta_1-\frac{\beta_2}{d-1})$ on $\mathbb R^2_{>0}$ and transforming them into $\mathcal C_d$ via $\textbf{C}_d$.

The hyerparameter updates are
\begin{align*}
    \beta_1 &\mapsto \beta_1 + \text{tr}(\boldsymbol s_n)\\
    \beta_2 &\mapsto \beta_2 + \big(\boldsymbol1^\top \boldsymbol s_n\boldsymbol1-\text{tr}(\boldsymbol s_n)\big)\\
    m &\mapsto m+n.
\end{align*}
We see that $\beta_1$ conveys information about the diagonal entries of prior data's variance-covariance matrix, and $\beta_2$ may be interpreted in terms of the off-diagonal entries; both of these values are based on $m$ observations in our sample.

\subsection{Marginal and Conditional Priors for the Entries of the Half-Precision}\label{sec:marg_cond}

Theorem \ref{thm:cs_prec_prior} indeed gives us the constant (\ref{eq:cs_prior_norm_cons}), but we still have not addressed how we determined the relevance of the transformation $\textbf{C}_d$ and the gamma variates to be transformed. We outline our reasoning here, choosing to relegate the finer details to Appendix \ref{app:details}. If we did not have access to Theorem \ref{thm:cs_prec_prior}, one means of determining the normalizing constant of (\ref{eq:prec_prior_not_norm}) would be to directly compute
\begin{equation*}
    \int_{\mathcal C_d}(\eta_1-\eta_2)^{\frac{m(d-1)}{2}}(\eta_1 + (d-1)\eta_2)^{\frac{m}{2}}\exp\{-\eta_1\beta_1 - \eta_2\beta_2\}\partial\boldsymbol\eta.
\end{equation*}
Alternatively, we might instead attempt to write the non-normalized prior as the product of a non-normalized marginal prior for $\eta_1$ and a non-normalized conditional prior for $\eta_2\ | \ \eta_1$, either of which might be more easily identifiable.

The latter approach first reveals the conditional prior of $\eta_2\ | \ \eta_1$ as a Kummer-Beta distribution \cite{Ng-1995,Nagar-2002} shifted and scaled to a support of $(-\frac{\eta_1}{d-1}, \eta_1)$. Next, the marginal prior for $\eta_1$ is revealed to be a convolved-gamma distribution \cite{DiSalvo-2006}, i.e., that of the sum of independent $Y_1\sim\Gamma(\frac{m+2}{2},\beta_1+\beta_2)$ and $Y_2\sim\Gamma(\frac{m(d-1)+2}{2}, \beta_1-\frac{\beta_2}{d-1})$, which should be unsurprising in light of Theorem \ref{thm:cs_prec_prior}. With the exact form of the conjugate prior for $\boldsymbol\eta$ now in hand one can show that this quantity is exactly equal in distribution to $\textbf C_d\boldsymbol Y$ (Theorem \ref{thm:distribution}).

This effectively concludes our analysis when the model's mean is known. In the next section we extend our conjugate prior framework to include models whose mean is unknown.



\section{Conjugate Bayesian Analysis in the Unknown Mean Case}\label{sec:unknown_mean}
So far we have considered the setting (\ref{eq:model}) in which the true population mean is known. Suppose instead that
\begin{equation}\label{eq:model2}
    \boldsymbol X_{i} \overset{i.i.d.}{\sim}\mathcal N_d\bigg(\boldsymbol\mu,\frac{1}{2}\boldsymbol{\mathcal H}^{-1}\bigg), \ i = 1,\dots,n 
\end{equation}
in which $\boldsymbol\mu\in \mathbbm R^d$ is the unknown population mean and $\boldsymbol{\mathcal H}$ once again denotes the unknown $d\times d$ positive definite precision matrix. Without any further constraints on either of these parameters, the Normal-Wishart distribution (\cite{Hoff-2009}, Chapter 7) serves as a natural conjugate prior, with hierarchical structure
\begin{align*}
    \boldsymbol{\mathcal H} &\sim W_d(\alpha, \textbf B)\\
    \boldsymbol\mu\ | \ \boldsymbol{\mathcal H} &\sim \mathcal N_d\bigg(\boldsymbol\nu, \frac{1}{2\lambda}\boldsymbol{\mathcal H}^{-1}\bigg)
\end{align*}
where $\alpha > \frac{d+1}{2}$, $\textbf B\in \mathbbm R^{d\times d}$ is positive definite, $\lambda>0$ and $\boldsymbol\nu\in \mathbbm R^d$. As in Section \ref{sec:exp_fam}, the first two hyperparameters convey prior information about $\boldsymbol{\mathcal H}$; meanwhile $\boldsymbol\nu$ functions as the portion of $\boldsymbol\tau$ corresponding to a prior estimate of $\boldsymbol\mu$ from $\lambda$ observations. The conjugate prior for (\ref{eq:model2}) when $\boldsymbol{\mathcal H}$ is assumed to be conic compound-symmetric (i.e., supported on $\mathcal C_d$) is the subject of the following theorem.

\begin{theorem}\label{thm:prec_prior_unknown_mean}
    Suppose $m_{\boldsymbol\mu}, m_{\boldsymbol{\mathcal H}}>0, \boldsymbol\nu = (\nu_1,\dots,\nu_d)^\top\in \mathbbm R^{d}, \beta_1 > 0,$ and $\beta_2\in(-\beta_1,(d-1)\beta_1)$. Define independent gamma variates
    \begin{align*}
        Y_1 &\sim \Gamma\bigg(\frac{m_{\boldsymbol{\mathcal H}}+2}{2}, \beta_1 +\beta_2\bigg)\\
        Y_2 &\sim \Gamma\bigg(\frac{m_{\boldsymbol{\mathcal H}}(d-1)+2}{2},\beta_1 -\frac{\beta_2}{d-1}\bigg)
    \end{align*}
    and the linear transformation matrix $\textbf{C}_d$ as in (\ref{eq:lin_trans}).
    
    The hierarchical density of
    \begin{align*}
        \boldsymbol\eta &= \textbf{C}_d\boldsymbol Y\\
        \boldsymbol\mu\ |\ \boldsymbol{\mathcal H} = (\eta_1-\eta_2)\textbf I_d +\eta_2\boldsymbol 1_d\boldsymbol 1_d^\top &\sim \mathcal N_d \bigg(\boldsymbol\nu, \frac{1}{2m_{\boldsymbol\mu}}\boldsymbol{\mathcal H}^{-1}\bigg)
    \end{align*}
    forms a conjugate prior for (\ref{eq:model2}) with CS precision, with posterior hyperparameters defined by
    \begin{align*}
        m_{\boldsymbol\mu} &\mapsto m_{\boldsymbol\nu}+n\\
        m_{\boldsymbol{\mathcal H}} &\mapsto m_{\boldsymbol{\mathcal H}}+n\\
        \boldsymbol\nu &\mapsto \frac{m_{\boldsymbol\nu}\boldsymbol\nu+n\boldsymbol{\overline x}}{m_{\boldsymbol\nu}+n}\\
        \beta_1 &\mapsto \beta_1 + \text{tr}(\boldsymbol s_n)+\frac{m_{\boldsymbol\mu}n}{m_{\boldsymbol\mu}+n}\sum_{j=1}^d(\overline{x}_j-\nu_j)^2\\
        \beta_2 &\mapsto \beta_2 + \boldsymbol 1^\top \boldsymbol s_n\boldsymbol 1 - \text{tr}(\boldsymbol s_n) + \bigg(\frac{m_{\boldsymbol\mu}n}{m_{\boldsymbol\mu}+n}\bigg) \sum_{j\neq k}(\overline{x}_j-\nu_j)(\overline{x}_k-\nu_k),
    \end{align*}
    where $\boldsymbol{\overline x} = \lbrack\overline x_1 \cdots \overline x_d\rbrack^\top = \frac{\sum_{i=1}^n \boldsymbol x_i}{n}$ and $\boldsymbol{s}_n = \sum_{i=1}^n (\boldsymbol x_i -\boldsymbol{\overline x})(\boldsymbol x_i -\boldsymbol{\overline x})^\top$.
\end{theorem}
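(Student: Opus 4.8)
The plan is to proceed exactly as for the ordinary Normal--Wishart model, substituting the compound-symmetric precision prior of Section~\ref{sec:exp_fam} for the Wishart factor and invoking Lemma~\ref{cs_trace_lemma} at the end. First I would write the unknown-mean likelihood
\begin{equation*}
  L(\boldsymbol\mu,\boldsymbol{\mathcal H}) \propto |\boldsymbol{\mathcal H}|^{n/2}\exp\bigg\{-\sum_{i=1}^n(\boldsymbol x_i-\boldsymbol\mu)^\top\boldsymbol{\mathcal H}(\boldsymbol x_i-\boldsymbol\mu)\bigg\}
\end{equation*}
and apply the usual within/between decomposition $\sum_i(\boldsymbol x_i-\boldsymbol\mu)^\top\boldsymbol{\mathcal H}(\boldsymbol x_i-\boldsymbol\mu) = \text{tr}(\boldsymbol{\mathcal H}\boldsymbol s_n) + n(\boldsymbol{\overline x}-\boldsymbol\mu)^\top\boldsymbol{\mathcal H}(\boldsymbol{\overline x}-\boldsymbol\mu)$. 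I would write the proposed prior as $\rho(\boldsymbol\eta)\,\rho(\boldsymbol\mu\mid\boldsymbol{\mathcal H})$, where $\rho(\boldsymbol\eta)\propto|\boldsymbol{\mathcal H}|^{m_{\boldsymbol{\mathcal H}}/2}\exp\{-\beta_1\eta_1-\beta_2\eta_2\}$ is the known-mean conjugate prior of Section~\ref{sec:exp_fam} (so that, by Theorem~\ref{thm:cs_prec_prior} and the remarks following~\eqref{eq:cs_prior_norm_cons}, $\boldsymbol\eta$ has exactly the $\textbf{C}_d\boldsymbol Y$ representation in the statement) and $\rho(\boldsymbol\mu\mid\boldsymbol{\mathcal H})\propto|\boldsymbol{\mathcal H}|^{1/2}\exp\{-m_{\boldsymbol\mu}(\boldsymbol\mu-\boldsymbol\nu)^\top\boldsymbol{\mathcal H}(\boldsymbol\mu-\boldsymbol\nu)\}$, and then multiply prior by likelihood.

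The core step is to collect the two $\boldsymbol\mu$-quadratics $n(\boldsymbol{\overline x}-\boldsymbol\mu)^\top\boldsymbol{\mathcal H}(\boldsymbol{\overline x}-\boldsymbol\mu)+m_{\boldsymbol\mu}(\boldsymbol\mu-\boldsymbol\nu)^\top\boldsymbol{\mathcal H}(\boldsymbol\mu-\boldsymbol\nu)$ and complete the square in $\boldsymbol\mu$; since $\boldsymbol{\mathcal H}$ sits in the middle of every form this is the standard Normal--Wishart identity, yielding $(m_{\boldsymbol\mu}+n)(\boldsymbol\mu-\boldsymbol\nu^\star)^\top\boldsymbol{\mathcal H}(\boldsymbol\mu-\boldsymbol\nu^\star)+\frac{m_{\boldsymbol\mu}n}{m_{\boldsymbol\mu}+n}(\boldsymbol{\overline x}-\boldsymbol\nu)^\top\boldsymbol{\mathcal H}(\boldsymbol{\overline x}-\boldsymbol\nu)$ with $\boldsymbol\nu^\star=\frac{m_{\boldsymbol\mu}\boldsymbol\nu+n\boldsymbol{\overline x}}{m_{\boldsymbol\mu}+n}$. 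The first piece times the leftover $|\boldsymbol{\mathcal H}|^{1/2}$ is a $\mathcal N_d(\boldsymbol\nu^\star,\frac{1}{2(m_{\boldsymbol\mu}+n)}\boldsymbol{\mathcal H}^{-1})$ kernel in $\boldsymbol\mu$, giving the updates $m_{\boldsymbol\mu}\mapsto m_{\boldsymbol\mu}+n$ and $\boldsymbol\nu\mapsto\boldsymbol\nu^\star$ (the $m_{\boldsymbol\nu}$ in the statement is to be read as $m_{\boldsymbol\mu}$); integrating out $\boldsymbol\mu$ contributes only a $\boldsymbol{\mathcal H}$-free constant. What remains is $|\boldsymbol{\mathcal H}|^{(m_{\boldsymbol{\mathcal H}}+n)/2}\exp\{-\text{tr}(\boldsymbol{\mathcal H}\,\textbf B^\star)\}$ restricted to CS $\boldsymbol{\mathcal H}$, with $\textbf B^\star=\textbf B+\boldsymbol s_n+\frac{m_{\boldsymbol\mu}n}{m_{\boldsymbol\mu}+n}(\boldsymbol{\overline x}-\boldsymbol\nu)(\boldsymbol{\overline x}-\boldsymbol\nu)^\top$, where $\textbf B$ is any symmetric matrix with $\text{tr}\,\textbf B=\beta_1$ and $\boldsymbol1^\top\textbf B\boldsymbol1-\text{tr}\,\textbf B=\beta_2$.

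To finish I would apply Lemma~\ref{cs_trace_lemma} exactly as in~\eqref{eq:prec_prior_not_norm} to recognise this residual as a known-mean CS precision prior with $m_{\boldsymbol{\mathcal H}}\mapsto m_{\boldsymbol{\mathcal H}}+n$, $\beta_1\mapsto\text{tr}\,\textbf B^\star$ and $\beta_2\mapsto\boldsymbol1^\top\textbf B^\star\boldsymbol1-\text{tr}\,\textbf B^\star$; expanding $\text{tr}$ and $\boldsymbol1^\top(\cdot)\boldsymbol1$ on the rank-one term yields $\sum_j(\overline{x}_j-\nu_j)^2$ and $\sum_{j\neq k}(\overline{x}_j-\nu_j)(\overline{x}_k-\nu_k)$, which are precisely the stated $\beta_1,\beta_2$ updates. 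By Theorem~\ref{thm:cs_prec_prior} the updated precision prior again admits the $\textbf{C}_d\boldsymbol Y$ form, so the posterior lies in the same hierarchical family and conjugacy follows; the remaining check is propriety, i.e.\ that the updated $(\beta_1,\beta_2)$ still obeys $\beta_1>0$, $-\beta_1<\beta_2<(d-1)\beta_1$, which holds since $\textbf B^\star$ is positive definite (positive definite $\textbf B$ plus the positive-semidefinite $\boldsymbol s_n$ and $\frac{m_{\boldsymbol\mu}n}{m_{\boldsymbol\mu}+n}(\boldsymbol{\overline x}-\boldsymbol\nu)(\boldsymbol{\overline x}-\boldsymbol\nu)^\top$), so Lemma~\ref{trace_inequality} applies. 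The only real obstacle is bookkeeping --- keeping the matrix-weighted completion of the square and the off-diagonal sums straight --- since Section~\ref{sec:exp_fam} has already done the conceptual work.
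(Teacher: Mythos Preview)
Your proposal is correct and follows essentially the same route as the paper's proof: write prior and likelihood in Normal--Wishart form, invoke the standard completion-of-the-square identity to obtain $\boldsymbol\nu^\star$ and $\textbf B^\star=\textbf B+\boldsymbol s_n+\frac{m_{\boldsymbol\mu}n}{m_{\boldsymbol\mu}+n}(\boldsymbol{\overline x}-\boldsymbol\nu)(\boldsymbol{\overline x}-\boldsymbol\nu)^\top$, and then apply Lemma~\ref{cs_trace_lemma} (together with Lemma~\ref{optimal_CS_mat}) to reduce $\text{tr}(\boldsymbol{\mathcal H}\textbf B^\star)$ to its trace and off-diagonal-sum summaries. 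Your explicit propriety check via Lemma~\ref{trace_inequality} is a small addition the paper omits, and your observation that the $m_{\boldsymbol\nu}$ in the statement should read $m_{\boldsymbol\mu}$ is correct.
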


For a given collection of hyperparameters $m_{\boldsymbol\Sigma}, \lambda_1,\lambda_2, m_{\boldsymbol\mu},$ and $\boldsymbol\nu$ we can also define a conjugate prior for $(\boldsymbol\mu, \boldsymbol\sigma)$, and thereby extend Theorem \ref{thm:cs_var_prior} to the unknown mean case as well.

\begin{theorem}\label{thm:var_prior_unknown_mean}
    Suppose $m_{\boldsymbol\Sigma}, \lambda_1,\lambda_2, m_{\boldsymbol\mu}>0,$ and $\boldsymbol\nu = (\nu_1,\dots,\nu_d)^\top\in\mathbbm R^d$. Define independent inverse-gamma variates
    \begin{align*}
        Z_1 &\sim \Gamma^{-1}\bigg(\frac{m_{\boldsymbol{\Sigma}}+2}{2}, \lambda_1\bigg)\\
        Z_2 &\sim \Gamma^{-1}\bigg(\frac{m_{\boldsymbol{\Sigma}}(d-1)+2}{2},\lambda_2\bigg)
    \end{align*}
    and $\textbf C_d$ as above.
    
    The hierarchical density of
    \begin{align*}
        \boldsymbol\sigma &= \textbf{C}_d\boldsymbol Z\\
        \boldsymbol\mu \ | \ \boldsymbol\Sigma = (\sigma_1-\sigma_2)\textbf I_d +\sigma_2\boldsymbol1_d\boldsymbol1_d^\top &\sim \mathcal N_d\bigg(\boldsymbol\nu,\frac{1}{m_{\boldsymbol\mu}}\boldsymbol\Sigma\bigg)
    \end{align*}
    forms a conjugate prior for (\ref{eq:model2}) parameterized with a compound symmetric variance-covariance matrix, with posterior hyperparameters defined by
    \begin{align*}
        m_{\boldsymbol\Sigma} &\mapsto m_{\boldsymbol\Sigma} +n\\
        \lambda_1 &\mapsto\lambda_1 + \frac{\boldsymbol1^\top\boldsymbol s_n\boldsymbol1}{2d^2} +\frac{m_{\boldsymbol\mu}n}{2d^2(m_{\boldsymbol\mu}+n)}\big(\boldsymbol1^\top(\boldsymbol{\overline x}-\boldsymbol\nu)\big)^2\\
        \lambda_2 &\mapsto \lambda_2 + \frac{(d-1)(d\text{tr}(\boldsymbol s_n)-\boldsymbol1^\top\boldsymbol s_n\boldsymbol1)}{2d^2}+\bigg(\frac{(d-1)m_{\boldsymbol\mu}n}{m_{\boldsymbol\mu}+n}\bigg)\frac{d\|\boldsymbol{\overline{x}}-\boldsymbol\nu\|^2 - \big(\boldsymbol1^\top(\boldsymbol{\overline x}-\boldsymbol\nu)\big)^2}{2d^2}\\
        m_{\boldsymbol\mu} &\mapsto m_{\boldsymbol\mu}+n\\
        \boldsymbol\nu &\mapsto \frac{m_{\boldsymbol\mu}\boldsymbol\nu + n\boldsymbol{\bar x}}{m_{\boldsymbol\mu}+n}.
    \end{align*}
\end{theorem}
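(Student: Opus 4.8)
The plan is to parallel the proof of Theorem~\ref{thm:prec_prior_unknown_mean}: complete the square in $\boldsymbol\mu$ so that the joint posterior factors into an updated conditional Gaussian for $\boldsymbol\mu\mid\boldsymbol\Sigma$ times a residual density in $\boldsymbol\sigma$ alone, and then recognise the latter as the known-mean posterior of Theorem~\ref{thm:cs_var_prior} with the scatter matrix $\boldsymbol s_n$ replaced by a ``posterior scatter'' $\boldsymbol S^\ast$.

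First I would write the likelihood of (\ref{eq:model2}), in which $\boldsymbol\Sigma=\tfrac12\boldsymbol{\mathcal H}^{-1}$ is the covariance, in centred form
\[
 L(\boldsymbol\mu,\boldsymbol\Sigma)\propto|\boldsymbol\Sigma|^{-n/2}\exp\Big\{-\tfrac12\text{tr}(\boldsymbol\Sigma^{-1}\boldsymbol s_n)-\tfrac{n}{2}(\boldsymbol{\overline x}-\boldsymbol\mu)^\top\boldsymbol\Sigma^{-1}(\boldsymbol{\overline x}-\boldsymbol\mu)\Big\},
\]
using $\sum_i(\boldsymbol x_i-\boldsymbol\mu)(\boldsymbol x_i-\boldsymbol\mu)^\top=\boldsymbol s_n+n(\boldsymbol{\overline x}-\boldsymbol\mu)(\boldsymbol{\overline x}-\boldsymbol\mu)^\top$, and write the proposed prior as $\rho(\boldsymbol\sigma)\,\rho(\boldsymbol\mu\mid\boldsymbol\Sigma)$, where $\rho(\boldsymbol\mu\mid\boldsymbol\Sigma)\propto|\boldsymbol\Sigma|^{-1/2}\exp\{-\tfrac{m_{\boldsymbol\mu}}{2}(\boldsymbol\mu-\boldsymbol\nu)^\top\boldsymbol\Sigma^{-1}(\boldsymbol\mu-\boldsymbol\nu)\}$, and, by the change of variables $\boldsymbol\sigma=\textbf{C}_d\boldsymbol Z$ behind Theorem~\ref{thm:cs_var_prior} (a linear map with constant Jacobian, $Z_1=\tfrac{\sigma_1+(d-1)\sigma_2}{d}$, $Z_2=\tfrac{(d-1)(\sigma_1-\sigma_2)}{d}$),
\[
 \rho(\boldsymbol\sigma)\propto(\sigma_1+(d-1)\sigma_2)^{-\frac{m_{\boldsymbol\Sigma}+4}{2}}(\sigma_1-\sigma_2)^{-\frac{m_{\boldsymbol\Sigma}(d-1)+4}{2}}\exp\Big\{-\tfrac{d\lambda_1}{\sigma_1+(d-1)\sigma_2}-\tfrac{d\lambda_2}{(d-1)(\sigma_1-\sigma_2)}\Big\}.
\]
Next I multiply likelihood by prior and collect the terms quadratic in $\boldsymbol\mu$; the standard completion of the square
\[
 n(\boldsymbol\mu-\boldsymbol{\overline x})^\top\boldsymbol\Sigma^{-1}(\boldsymbol\mu-\boldsymbol{\overline x})+m_{\boldsymbol\mu}(\boldsymbol\mu-\boldsymbol\nu)^\top\boldsymbol\Sigma^{-1}(\boldsymbol\mu-\boldsymbol\nu)=(m_{\boldsymbol\mu}+n)(\boldsymbol\mu-\boldsymbol\nu')^\top\boldsymbol\Sigma^{-1}(\boldsymbol\mu-\boldsymbol\nu')+\tfrac{m_{\boldsymbol\mu}n}{m_{\boldsymbol\mu}+n}(\boldsymbol{\overline x}-\boldsymbol\nu)^\top\boldsymbol\Sigma^{-1}(\boldsymbol{\overline x}-\boldsymbol\nu),
\]
with $\boldsymbol\nu'=\tfrac{m_{\boldsymbol\mu}\boldsymbol\nu+n\boldsymbol{\overline x}}{m_{\boldsymbol\mu}+n}$, peels off exactly the updated conditional Gaussian (absorbing one power $|\boldsymbol\Sigma|^{-1/2}$ and a $\boldsymbol\sigma$-free constant), yielding $m_{\boldsymbol\mu}\mapsto m_{\boldsymbol\mu}+n$, $\boldsymbol\nu\mapsto\boldsymbol\nu'$.

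What remains is $|\boldsymbol\Sigma|^{-n/2}\exp\{-\tfrac12\text{tr}(\boldsymbol\Sigma^{-1}\boldsymbol S^\ast)\}\rho(\boldsymbol\sigma)$ with $\boldsymbol S^\ast:=\boldsymbol s_n+\tfrac{m_{\boldsymbol\mu}n}{m_{\boldsymbol\mu}+n}(\boldsymbol{\overline x}-\boldsymbol\nu)(\boldsymbol{\overline x}-\boldsymbol\nu)^\top$, i.e.\ precisely the known-mean posterior of Theorem~\ref{thm:cs_var_prior} with $\boldsymbol s_n$ replaced by $\boldsymbol S^\ast$, once Lemmata~\ref{mat_det_lem}, \ref{cs_trace_lemma} and \ref{cs_inverse_entries} together with the partial-fraction step behind (\ref{eq:prior_form2}) are invoked to express $|\boldsymbol\Sigma|$ and $\text{tr}(\boldsymbol\Sigma^{-1}\boldsymbol S^\ast)$ through $\sigma_1-\sigma_2$ and $\sigma_1+(d-1)\sigma_2$. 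Hence $\rho(\boldsymbol\sigma)$ updates by $\alpha_1\mapsto\alpha_1+\tfrac n2$, $\alpha_2\mapsto\alpha_2+\tfrac{n(d-1)}{2}$, $\lambda_1\mapsto\lambda_1+\tfrac{\boldsymbol 1^\top\boldsymbol S^\ast\boldsymbol 1}{2d^2}$, $\lambda_2\mapsto\lambda_2+\tfrac{d-1}{2d^2}\big(d\,\text{tr}(\boldsymbol S^\ast)-\boldsymbol 1^\top\boldsymbol S^\ast\boldsymbol 1\big)$; substituting $\alpha_1=\tfrac{m_{\boldsymbol\Sigma}+2}{2}$, $\alpha_2=\tfrac{m_{\boldsymbol\Sigma}(d-1)+2}{2}$ and expanding $\text{tr}(\boldsymbol S^\ast)=\text{tr}(\boldsymbol s_n)+\tfrac{m_{\boldsymbol\mu}n}{m_{\boldsymbol\mu}+n}\|\boldsymbol{\overline x}-\boldsymbol\nu\|^2$ and $\boldsymbol 1^\top\boldsymbol S^\ast\boldsymbol 1=\boldsymbol 1^\top\boldsymbol s_n\boldsymbol 1+\tfrac{m_{\boldsymbol\mu}n}{m_{\boldsymbol\mu}+n}\big(\boldsymbol 1^\top(\boldsymbol{\overline x}-\boldsymbol\nu)\big)^2$ reproduces the stated updates for $m_{\boldsymbol\Sigma},\lambda_1,\lambda_2$. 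Since $\boldsymbol S^\ast$ is positive semidefinite one has $\boldsymbol 1^\top\boldsymbol S^\ast\boldsymbol 1\ge 0$ and $d\,\text{tr}(\boldsymbol S^\ast)\ge\boldsymbol 1^\top\boldsymbol S^\ast\boldsymbol 1$, so the updated $\lambda_j$ stay positive and the posterior remains in the same family.

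The main obstacle is bookkeeping rather than conceptual: one must apportion the $|\boldsymbol\Sigma|$ powers so that exactly one factor $|\boldsymbol\Sigma|^{-1/2}$ is absorbed into the updated conditional Gaussian while $|\boldsymbol\Sigma|^{-n/2}$ stays with $\rho(\boldsymbol\sigma)$, and one must verify that the rank-one correction produced by completing the square enters additively into the scatter matrix in exactly the form Theorem~\ref{thm:cs_var_prior} consumes. Once the residual is recognised as the known-mean posterior for $\boldsymbol S^\ast$, the remainder is routine substitution.
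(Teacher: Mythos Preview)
Your proof is correct, but it takes a different route from the paper's own argument. You parallel the \emph{proof} of Theorem~\ref{thm:prec_prior_unknown_mean} directly in the variance parametrisation: you complete the square in $\boldsymbol\mu$, peel off the updated conditional Gaussian, and recognise the residual as the known-mean posterior of Theorem~\ref{thm:cs_var_prior} with $\boldsymbol s_n$ replaced by $\boldsymbol S^\ast=\boldsymbol s_n+\tfrac{m_{\boldsymbol\mu}n}{m_{\boldsymbol\mu}+n}(\boldsymbol{\overline x}-\boldsymbol\nu)(\boldsymbol{\overline x}-\boldsymbol\nu)^\top$. The paper instead reduces to the \emph{statement} of Theorem~\ref{thm:prec_prior_unknown_mean} by a change of variables: it sets $Y_1=2d^2Z_1^{-1}$ and $Y_2=\tfrac{2d^2}{(d-1)^2}Z_2^{-1}$, so that the inverse-gamma prior on $\boldsymbol\sigma$ becomes a gamma prior on $\boldsymbol\eta$, reads off the corresponding $(m_{\boldsymbol{\mathcal H}},\beta_1,\beta_2)$ in terms of $(m_{\boldsymbol\Sigma},\lambda_1,\lambda_2)$, applies Theorem~\ref{thm:prec_prior_unknown_mean}'s updates, and translates back.

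What each approach buys: the paper's route is terse and avoids redoing the completing-the-square argument, but the algebraic dictionary between $(\beta_1,\beta_2)$ and $(\lambda_1,\lambda_2)$ is somewhat opaque and the final translation back to the $\lambda_j$ updates is asserted rather than displayed. Your route repeats the quadratic-form manipulation already done for Theorem~\ref{thm:prec_prior_unknown_mean}, but it is self-contained in the covariance parametrisation and makes the origin of the $\lambda_j$ updates transparent: they are exactly Theorem~\ref{thm:cs_var_prior}'s updates evaluated at $\boldsymbol S^\ast$, with $\text{tr}(\boldsymbol S^\ast)$ and $\boldsymbol 1^\top\boldsymbol S^\ast\boldsymbol 1$ expanded as you indicate.
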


The proofs of Theorems \ref{thm:prec_prior_unknown_mean} and \ref{thm:var_prior_unknown_mean} have been relegated to Appendix \ref{sec:proofs_of_theorems}. The latter result permits us to straightforwardly conduct Bayesian hypothesis tests as to whether $\boldsymbol\sigma$ falls in a particular region of $\mathcal C_d$; we give an example of such a test in the following section.



\section{Test for the Positivity of a Common Within-Class Correlation in a Random-Intercept Model}\label{sec:h_tests_for_cs}

Suppose $D$ univariate observations $X_{ij}, \ j = 1,\dots,J,\ i = 1,\dots,d_j,\ \sum_{j=1}^J d_j = D$ arise from a random-intercept model (e.g., \cite{Raudenbush-2002}, Chapter 5):
\begin{align*}
    X_{ij} &= \mu + \mu_{0j}+\epsilon_{ij}\\
    \epsilon_{ij} &\overset{i.i.d.}{\sim} \mathcal N(0, \sigma_{\epsilon})\\
    \mu_{0j} &\overset{i.i.d.}{\sim} \mathcal N(0, \sigma_{\mu})
\end{align*}
where the $\mu_{0j}$ and $\epsilon_{ij}$ are independent of each other. In this two-level model each observation deviates from the global mean $\mu$ by a random quantity $\mu_{0j}$ particular to the group to which it belongs in addition to an error term $\epsilon_{ij}$ particular only to that observation. This model has been frequently implemented in educational research to model the nesting of children in classes or schools (e.g., \cite{Mulder-2013, Mulder-2019}). One can show, marginally, that the observations within a particular group share a common pairwise correlation $\frac{\sigma_\mu}{\sigma_\mu+\sigma_\epsilon}$. Expressing the model in the above conditional paradigm permits convenient partitioning of the total variation within and between classes.

However, when the primary research focus concerns the structure of within-group dependency and not necessarily how the total variation may be partitioned, one may treat the $\mu_{0j}$ as nuisance parameters and instead consider the marginal model
\begin{equation}\label{marginal_model}
    \boldsymbol X = \begin{bmatrix}
        \boldsymbol X_{\cdot 1}\\
        \vdots\\
        \boldsymbol X_{\cdot J}
    \end{bmatrix} \sim \mathcal N(\mu\boldsymbol 1_D, \boldsymbol\Sigma^*)
\end{equation}
in which
\begin{equation*}
    \boldsymbol\Sigma^* = \bigoplus_{j=1}^J \big(\sigma_\epsilon \textbf I_{d_j} + \sigma_{\mu}\boldsymbol 1_{d_j}\boldsymbol 1_{d_j}^\top\big),
\end{equation*}
where $\bigoplus$ denotes a direct sum. Note that the matrices on the block diagonal of $\boldsymbol\Sigma^*$ all possess compound symmetric structure. Under the conditional model it must necessarily be the case that $\sigma_\mu > 0$ so that the distribution of the $\mu_{0j}$ can be defined, but one can define a marginal model with $\sigma_\mu \in (-\frac{\sigma_\epsilon}{d_{\text{max}}}, 0\rbrack$. The marginal formulation therefore describes a greater class of models. In this section we are concerned with developing a hypothesis test to determine whether the more restrictive conditional formulation may be permitted; i.e.,
\begin{align*}
    H_0&:\ \sigma_\mu \leq 0\\
    \text{versus}\ H_a&:\ \sigma_\mu > 0.
\end{align*}
If we conclude $H_a$ we effectively determine that the data may be suitably modeled by a random-intercept model. Given the observed data vector $\boldsymbol x$ a na\"ive Bayesian hypothesis test consists of evaluating the posterior probability of $H_0$ versus that of $H_a$; if $P(H_0\ |\ \boldsymbol X = \boldsymbol x)>\frac{1}{2}$, then we conclude $H_0$. These hypotheses were considered by \cite{Mulder-2013}, who used MCMC methods to obtain posterior samples.

In line with our earlier work, the marginal model (\ref{marginal_model}) may be re-expressed in terms of its half-precision; if we rewrite $\boldsymbol\Sigma^*$ in terms of $\sigma_1= \sigma_\epsilon + \sigma_\mu$ and $\sigma_2 = \sigma_\mu$, we have that $ \boldsymbol\Sigma^* = \frac{1}{2}(\boldsymbol{\mathcal H}^*)^{-1}$ where
\begin{equation*}
    \boldsymbol{\mathcal H}^* = \frac{1}{2}\bigoplus_{j=1}^J\bigg((\eta_{1j}-\eta_{2j})\textbf I_{d_j} + \eta_{2j}\boldsymbol1_{d_j}\boldsymbol1^\top_{d_j}\bigg),
\end{equation*}
where the $\eta_{1j}$ and $\eta_{2j}$ are given by Lemma \ref{cs_inverse_entries}, i.e.,
\begin{align*}
    \eta_{1j} &= \frac{\sigma_1 +(d_j-2)\sigma_2}{(\sigma_1-\sigma_2)(\sigma_1+(d_j-1)\sigma_2))}\\
    \eta_{2j} &= \frac{-\sigma_2}{(\sigma_1-\sigma_2)(\sigma_1+(d_j-1)\sigma_2))}.
\end{align*}
When the $d_j$ are all equal, it is clear that the $\eta_{1j}$ and $\eta_{2j}$ are as well, and so the space of such $\boldsymbol{\mathcal H}^*$ forms a two-dimensional linear subspace of the $D\times D$ positive definite cone. In this case it is straightforward to sample from the posterior density of $\boldsymbol\sigma \ | \ \boldsymbol X=\boldsymbol x$ if we equip $\boldsymbol\sigma$ with an inverse-Gamma-type conjugate prior.

However, when the $d_j$ are not all equal the inverse-Wishart-type prior on $\mathcal C_{d_\text{max}}$ is not conjugate for $\boldsymbol\sigma$. In this case we propose a Gibbs sampling procedure \cite{gibbs-sampler} to sample from the posterior of $\boldsymbol\sigma\ | \ \boldsymbol X=\boldsymbol x$ based around the EM algorithm \cite{DLR}. The natural complete-data formulation of the marginal model (\ref{marginal_model}) consists of extending the vectors $\boldsymbol X_{\cdot j}$ each to a full length of $d_{\text{max}}$ with $d_{\text{max}}-d_j$ latent $Y$ random variates:
\begin{equation*}
    \begin{bmatrix}
        \boldsymbol X_{\cdot j}\\
        \boldsymbol Y_{\cdot j}
    \end{bmatrix}\ | \ \mu, \boldsymbol\sigma
    \overset{i.i.d.}\sim \mathcal N_{d_\text{max}}(\mu\boldsymbol1_{d_\text{max}}, (\sigma_1-\sigma_2)\textbf I_{d_\text{max}} +\sigma_2\boldsymbol1_{d_\text{max}}\boldsymbol1_{d_\text{max}}^\top).
\end{equation*}
Should we observe the latent random variables, the posterior for $(\mu,\boldsymbol\sigma)\ |  \ \boldsymbol X=\boldsymbol x, \boldsymbol Y = \boldsymbol y$ may be easily derived.
We also have that the latent variables, conditioned on $\boldsymbol X=\boldsymbol x,\mu$ and $\boldsymbol\sigma$, are normally distributed, i.e.,
\begin{equation*}
    \boldsymbol Y_{\cdot j}\ | \ \mu, \boldsymbol\sigma, \boldsymbol X= \boldsymbol x \sim \mathcal N_{d_{\text{max}}-d_j} (\boldsymbol\mu_j^*, \boldsymbol\Sigma_j^*)
\end{equation*}
where
\begin{align*}
    \boldsymbol\mu_{j}^* &= \mu\boldsymbol 1_{d_{\text{max}}-d_j} + \sigma_2\boldsymbol1_{d_{\text{max}}-d_j}\boldsymbol 1_{d_j}^\top \bigg\lbrack \frac{(\sigma_1 +(d_j-1)\sigma_2)\textbf I_{d_j}-\sigma_2\boldsymbol1_{d_j}\boldsymbol1_{d_j}^\top}{(\sigma_1-\sigma_2)(\sigma_1+(d_j-1)\sigma_2)}\bigg\rbrack(\boldsymbol x_{\cdot j} -\mu\boldsymbol 1_{d_j})\\
    \boldsymbol\Sigma_{j}^* &= (\sigma_1-\sigma_2)\textbf I_{d_{\text{max}}-d_j}+ \sigma_2\bigg(\frac{\sigma_1^2-2\sigma_1\sigma_2-\sigma_2^2(d_j-1)}{(\sigma_1-\sigma_2)(\sigma_1+(d_j-1)\sigma_2)}\bigg)\boldsymbol1_{d_{\text{max}}-d_j}\boldsymbol1_{d_{\text{max}}-d_j}^\top.
\end{align*}
The full conditional distributions of the $\boldsymbol Y_{\cdot, j},\mu,$ and $\boldsymbol\sigma$ are available to us, hence we propose the following Gibbs sampling procedure:
\begin{enumerate}
    \item Initialize $\boldsymbol\sigma = \boldsymbol\sigma^{(0)}, \mu = \mu^{(0)},$ and $\boldsymbol Y_{\cdot j} = \boldsymbol y_{\cdot j}^{(0)}$. Define $s = 0$.
    \item Set $(\mu,\boldsymbol\sigma) \mapsto (\mu^{(s+1)},\boldsymbol\sigma^{(s+1)})$ sampled from their posterior given $(\boldsymbol X, \boldsymbol Y)=(\boldsymbol x, \boldsymbol y^{(s)})$.
    \item Set $\boldsymbol Y_{\cdot j} \mapsto \boldsymbol y_{\cdot j}^{(s+1)}$ sampled from their normal density given $(\boldsymbol X, \mu, \boldsymbol\sigma) = (\boldsymbol x, \mu^{(s+1)}, \boldsymbol\sigma^{(s+1)})$. Set $s\mapsto s+1$.
    \item Repeat steps 2 and 3 until a desired sample size has been achieved.
\end{enumerate}

We propose two means of initializing the parameters and latent variables. One method is to first generate the parameters from their joint prior, and then the latent variables from their full conditional. Another is to carry out the EM algorithm induced by the complete data-formulation and initialize the parameters at their resulting MAP estimates, and then either initialize the latent random variables at their conditional expectations or at values randomly generated from their full conditionals. The former initialization does not require wrestling with the EM algorithm, but may require longer burn-in depending on the sample from the prior or may not even be possible if we use an uninformative or improper prior; whereas the latter may require less burn-in, but still requires an iterative scheme which must also be properly initialized. However, in computing the MAP estimates we uncover more information about the posterior; hence we recommend the EM algorithm, which we have relegated to Appendix \ref{app:EM_for_RI}.

Regardless of whether the $d_j$ are all equal or not, we can generate a sample from the posterior of $(\mu,\boldsymbol\sigma)$. The desired posterior probability of $H_0 \ |\ \boldsymbol X=\boldsymbol x$ can then be approximated as such:
\begin{align*}
    P(H_0\ | \ \boldsymbol X = \boldsymbol x) &= \int_{\sigma_2<0}\rho(\boldsymbol\sigma \ |\ \boldsymbol x)d\boldsymbol\sigma\\
    &\approx \frac{\sum_{s=1}^{S}\mathbbm 1(\sigma_2^{(s)}<0)}{S}
\end{align*}
where $S$ is the size of our posterior sample (excluding burn-in, if the $d_j$ are unequal). Again, if this approximation is greater than $\frac12$, we fail to reject $H_0$ in favor of the alternative that $\sigma_2$ is non-negative.

\subsection{Illustration}

We provide a simple illustration of this hypothesis test using two datasets found in Chapter 5 of \cite{Box-Tiao}; both datasets, printed in Tables \ref{tab:my-table}--\ref{tab:my-table2}, arise as $J=6$ samples all of size $d=5$. We assume a frequentist perspective and sample from the posterior in the absence of prior information, i.e., by equipping our model with the uninformative, improper prior
\begin{equation*}
    \rho(\mu,\boldsymbol\sigma) \propto (\sigma_1-\sigma_2)^{-(5-1)}(\sigma_1+(5-1)\sigma_2)^{-1},\  (\mu,\boldsymbol\sigma) \in \mathbbm R\times \mathcal C_5.
\end{equation*}
We estimated the desired posterior probability by sampling 100,000 times from the posterior of $\boldsymbol\eta$ and computing the proportion of samples with $\sigma_2 > 0$, which can be easily carried out in R using only primitive functions. We obtain
\begin{equation*}
    P(H_0\ | \ \boldsymbol X=\boldsymbol x) \approx 0.0080 < \frac{1}{2}
\end{equation*}
for the first dataset and
\begin{equation*}
    P(H_0\ | \ \boldsymbol X=\boldsymbol x) \approx 0.8246  > \frac{1}{2}
\end{equation*}
for the second. I.e., we conclude that the conditional formulation of the random intercept model is appropriate for the first dataset, but not so for the second; this was the conclusion reached by \cite{Mulder-2013}. Scatterplots of the posterior samples with density contours are depicted below.

\begin{table}[!ht]
\centering
\begin{tabular}{ccccccc}
Group           & 1    & 2    & 3    & 4    & 5    & 6    \\ \hline
                & 1545 & 1540 & 1595 & 1445 & 1595 & 1520 \\
                & 1440 & 1555 & 1550 & 1440 & 1630 & 1455 \\
                & 1440 & 1490 & 1605 & 1595 & 1515 & 1450 \\
                & 1520 & 1560 & 1510 & 1465 & 1635 & \textbf{1480} \\
                & 1580 & 1495 & 1560 & 1545 & \textbf{1625} & \textbf{1445} \\ \hline
$\bar{x}_j$ & 1505 & 1528 & 1564 & 1498 & 1600 & 1470 \\ \hline
$\bar x$    & \multicolumn{6}{c}{1527.5}             
\end{tabular}
\caption{The first of two datasets found in Chapter 5 of \cite{Box-Tiao}.}
\label{tab:my-table}
\end{table}

\begin{table}[!ht]
\centering
\begin{tabular}{ccccccc}
Group       & 1      & 2      & 3      & 4      & 5      & 6      \\ \hline
            & 7.298  & 5.220  & 0.110  & 2.212  & 0.282  & 1.722  \\
            & 3.846  & 6.556  & 10.386 & 4.852  & 9.014  & 4.782  \\
            & 2.434  & 0.608  & 13.434 & 7.092  & 4.458  & 8.106  \\
            & 9.566  & 11.788 & 5.510  & 9.288  & 9.446  & \textbf{0.758}  \\
            & 7.990  & -0.982 & 8.166  & 4.980  & \textbf{7.198}  & \textbf{3.758} \\ \hline
$\bar{x}_j$ & 6.2268 & 4.6380 & 7.5212 & 5.6848 & 6.0796 & 3.8252 \\ \hline
$\bar x$    & \multicolumn{6}{c}{5.6626}                         
\end{tabular}
\caption{The second of two datasets found in Chapter 5 of \cite{Box-Tiao}.}
\label{tab:my-table2}
\end{table}

\begin{figure}[!ht]
    \centering
    \includegraphics[width=\textwidth]{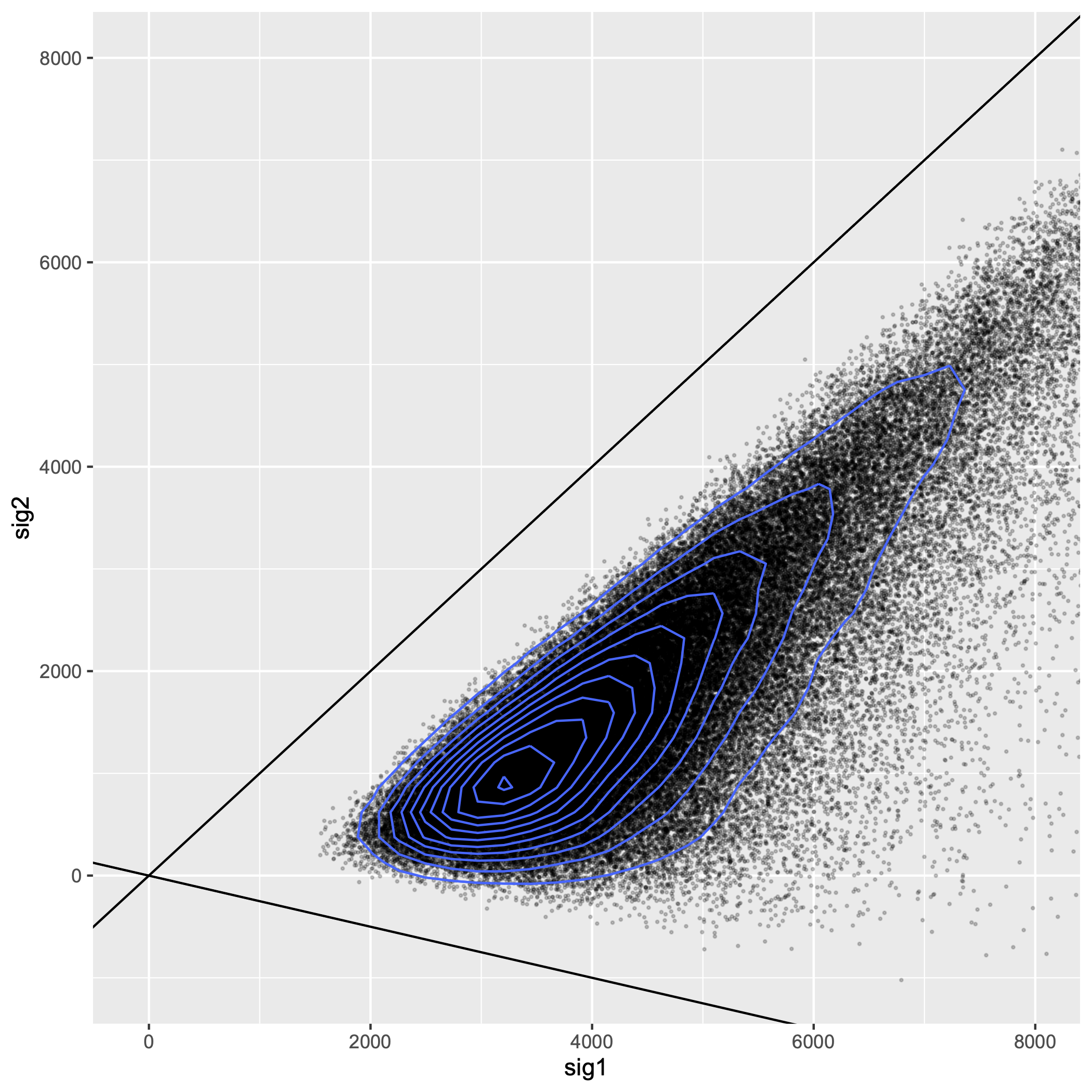}
    \caption{Posterior sample for the entries of the variance-covariance matrix for the dataset in Table \ref{tab:my-table}.}
    \label{fig:dataset1}
\end{figure}
\begin{figure}[!ht]
    \centering
    \includegraphics[width=\textwidth]{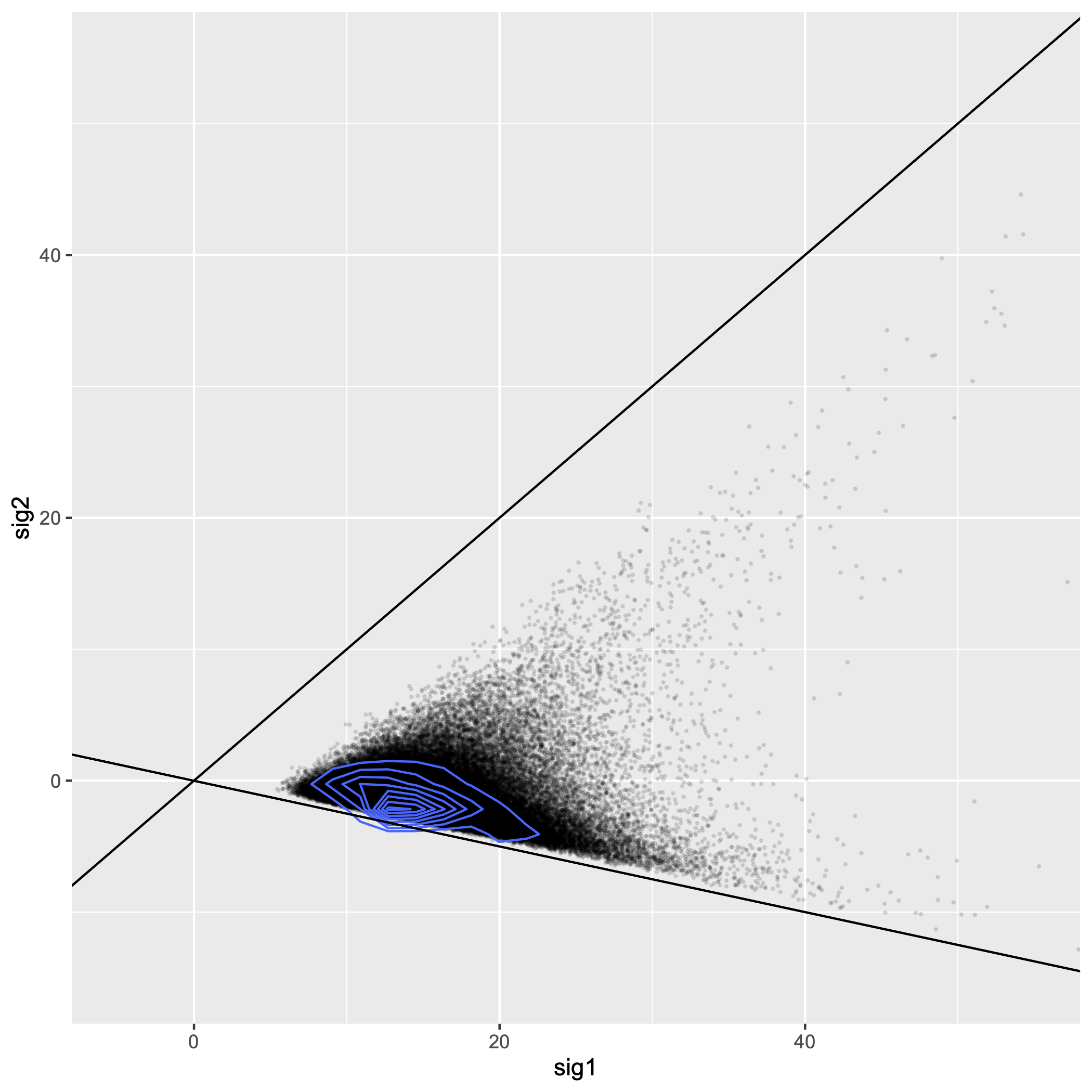}
    \caption{Posterior sample for the entries of the variance-covariance matrix for the dataset in Table \ref{tab:my-table2}.}
    \label{fig:dataset2}
\end{figure}

\FloatBarrier

We also repeated the test with the tables' emboldened entries removed so as to simulate settings in which the groups are of unequal size. For both modified datasets, we initialized the Gibbs sampler at the EM estimates of $\mu$ and $\boldsymbol\sigma$, burnt-in 1000 samples, and then iterated the sampler 100,000 times. We obtained
\begin{equation*}
    P(H_0\ | \ \boldsymbol X=\boldsymbol x) \approx 0.2632 < \frac{1}{2}
\end{equation*}
for the first dataset and
\begin{equation*}
    P(H_0\ | \ \boldsymbol X=\boldsymbol x) \approx 0.8664  > \frac{1}{2}
\end{equation*}
for the second, yielding the same conclusions as when we considered the entire datasets.

\FloatBarrier
\begin{figure}[h!]
    \centering
    \includegraphics[width=\textwidth]{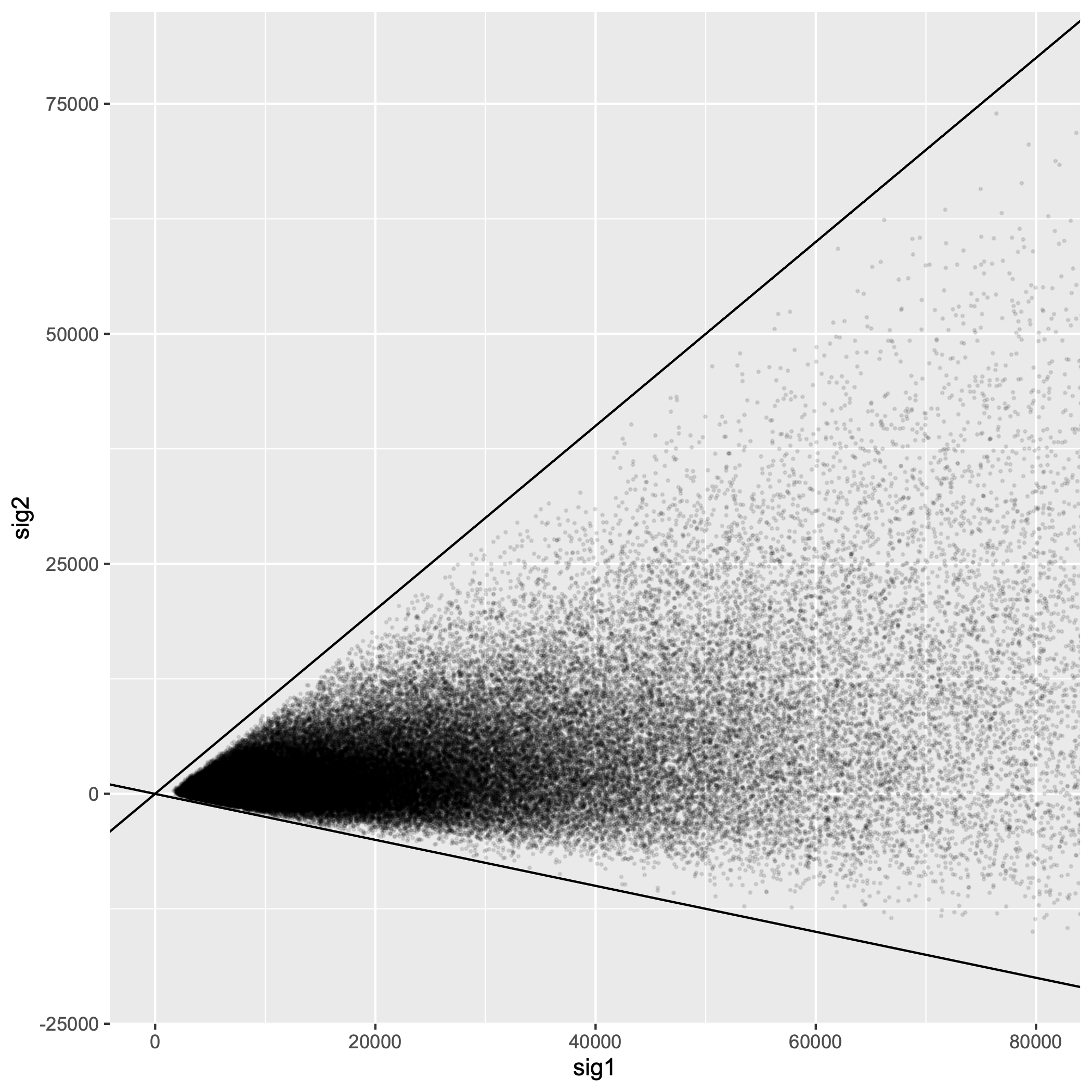}
    \caption{Posterior sample for the entries of the variance-covariance matrix for the dataset in Table \ref{tab:my-table} (emboldened data removed).}
    \label{fig:dataset1mod}
\end{figure}
\begin{figure}[h!]
    \centering
    \includegraphics[width=\textwidth]{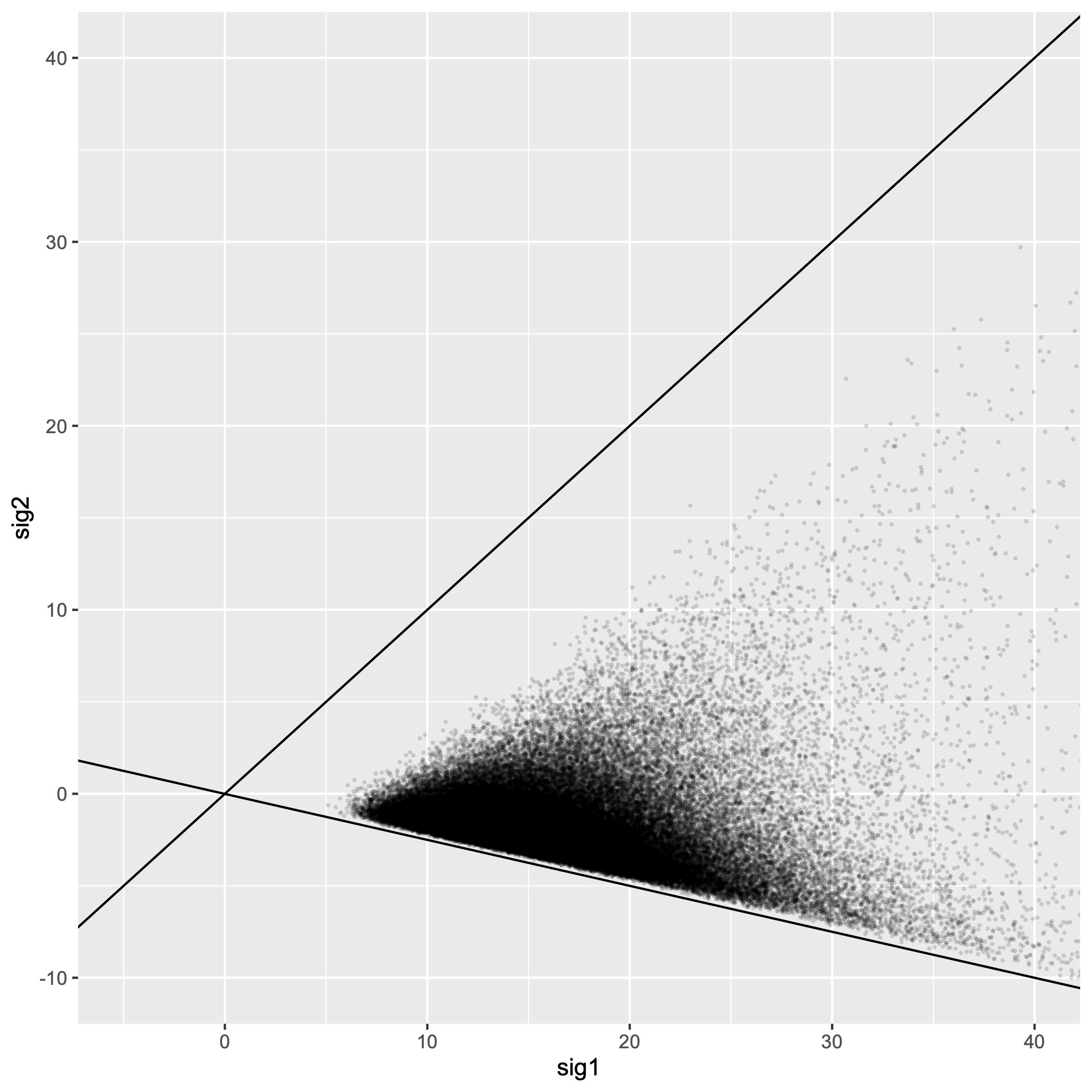}
    \caption{Posterior sample for the entries of the variance-covariance matrix for the dataset in Table \ref{tab:my-table2} (emboldened data removed).}
    \label{fig:dataset2mod}
\end{figure}
\FloatBarrier

\section{Discussion}\label{sec:discussion}

We have described conjugate prior distributions for two characterizations of a Gaussian model with a compound symmetric variance-covariance matrix. The derivation of these priors depended upon the expression of the desired model as a full-rank linear submodel of an exponential family in canonical form, a fact which allowed us to initially propose the existence of these priors before deriving them in terms of the canonical Wishart prior for the Gaussian half-precision. Although the exact densities which comprise the priors are nonstandard, simple changes-of-variables demonstrate that sampling from them is no more complicated than sampling from gamma or inverse-gamma distributions.

We suspect that similar approaches may yield conjugate priors for other linear submodels of the multivariate Gaussian model, e.g., Toeplitz \cite{cai-2013} and block-compound symmetric variance-covariance \cite{Coelho-2017}. In particular, we suspect that Wishart/inverse-Wishart generalizations of the priors described in Sections \ref{sec:primary_results}-\ref{sec:exp_fam} will arise as conjugate priors for the latter model. We plan to demonstrate this in future work.

The hypothesis test in Section \ref{sec:h_tests_for_cs} provides an example of the priors' utility beyond the mathematical convenience they exude. \cite{Mulder-2013,Mulder-2019, Fox-2017} considered related tests concerning the ordering and equality of the groups' respective interclass correlations. Our own conjugate priors are unsuitable for these tests, since the alternative models are not linear submodels of the canonical formulation of the multivariate Gaussian model.  If, however, the research question on hand concerns the ordering of interclass co-precisions then the Wishart-type conjugate prior on $\mathcal C_d$ may be extended to yield balanced Bayes factors across all possible orderings when the groups are of equal size.

The conjugate prior for the entries of a compound symmetric matrix can also be used to perform model selection to determine whether the variance-covariance matrix Gaussian data may be assumed to possess either arbitrarily positive definite structure or be restricted to a specific linear subset, e.g., diagonal, constant diagonal, or compound symmetric. Approximate model selection criteria such as BIC \cite{BIC} are not necessary for this task; since conjugate priors are available for all of these structures, each model's evidence may be computed in closed form. Moreover, the linearly nested relationship of these models amongst each other permits one to ``match'' their hyperparameters so as to mitigate the possibility of prior regularization being to used to arbitrarily select the final model \textit{a priori} (\cite{Pisano-diss}, Chapter 3; and \cite{Klugkist-2007}). The conjugate prior can also be used to carry out an exact Bayesian hypothesis test for repeated measures ANOVA \cite{lee-2015}; in this case we can simultaneously verify the sphericity assumption and carry out the ANOVA test itself by computing the evidences of the corresponding models.

Such considerations effectively proffer the priors we derived as the default Bayesian regularization for Gaussian models exhibiting compound symmetry. In all of these proposed applications, the usual decision criteria for statistical tests --- i.e., posterior probabilities of hypotheses and marginal likelihoods of the data --- can be obtained either in analytical form or after minimal Monte Carlo simulation. We contrast the implementation of these conjugate priors with those constructed in previous work, which only permit posterior inference after carrying out computationally intensive MCMC methods.



\bibliographystyle{imsart-number.bst} 
\bibliography{main.bib}       


\appendix
\section{Proofs of Theorems}\label{sec:proofs_of_theorems}

\begin{proof}[Proof of Theorem \ref{thm:cs_prec_prior}]\label{pf:cs_prec_prior}
    For the sake of notational recycling, let $\boldsymbol\eta = \textbf{C}_d\boldsymbol Y$. We shall derive the distribution of $\boldsymbol\eta$ via change-of-variables and verify that it possesses the form (\ref{eq:prior_form}). We have
    \begin{equation*}
        f_{\boldsymbol{Y}}(\boldsymbol y; \boldsymbol\alpha, \boldsymbol\lambda) = \frac{\lambda_1^{\alpha_1}\lambda_2^{\alpha_2}}{\Gamma(\alpha_1)\Gamma(\alpha_2)}y_1^{\alpha_1-1}y_2^{\alpha_2-1}\exp\{-\lambda_1 y_1-\lambda_2y_2\}
    \end{equation*}
    on $\mathbb R^2_{>0}$, and inverse transformation $\boldsymbol Y = \textbf{C}_d^{-1}\boldsymbol\eta$ in which
    \begin{equation*}
        \textbf{C}_d^{-1} = \begin{bmatrix}
            \frac{1}{d} &\frac{d-1}{d}\\ \frac{d-1}{d}& -\frac{d-1}{d}
            \end{bmatrix};
    \end{equation*}
    i.e.,
    \begin{align*}
        y_1 &= \frac{1}{d}\big(\eta_1 + \eta_2(d-1)\big)\\
        y_2 &= \bigg(\frac{d-1}{d}\bigg)(\eta_1\ - \eta_2).
    \end{align*}
    Hence, change-of-variables gives us 
    \begin{align*}
        f_{\boldsymbol\eta}(\boldsymbol\eta; \boldsymbol\alpha, \boldsymbol\lambda)  = \ &f_{\boldsymbol Y}(\textbf{C}_d^{-1}\boldsymbol\eta; \boldsymbol\alpha, \boldsymbol\lambda) \big||\textbf{C}_d|\big|^{-1}\\
        =\ &\big||\textbf{C}_d|\big|^{-1} \frac{\lambda_1^{\alpha_1}\lambda_2^{\alpha_2}}{\Gamma(\alpha_1)\Gamma(\alpha_2)}\\
        &\times\frac{(d-1)^{\alpha_2-1}}{d^{\alpha_1+\alpha_2-2}} (\eta_1-\eta_2)^{\alpha_2-1} \big(\eta_1 + \eta_2(d-1)\big)^{\alpha_1-1}\\
        &\times \exp\bigg\{-\frac{\lambda_1}{d}\big(\eta_1 + \eta_2(d-1)\big) - \lambda_2\bigg(\frac{d-1}{d}\bigg)(\eta_1\ - \eta_2)\bigg\}.
    \end{align*}
    Simplification yields
    \begin{align*}
        f_{\boldsymbol\eta}(\boldsymbol\eta; \boldsymbol\alpha, \boldsymbol\lambda) = \ &\frac{(d-1)^{\alpha_2}}{d^{\alpha_1+\alpha_2-1}}\times\frac{\lambda_1^{\alpha_1}\lambda_2^{\alpha_2}}{\Gamma(\alpha_1)\Gamma(\alpha_2)}(\eta_1-\eta_2)^{\alpha_2-1} \big(\eta_1 + \eta_2(d-1)\big)^{\alpha_1-1}\\
        &\times \exp\bigg\{-\bigg(\frac{\lambda_1+(d-1)\lambda_2)}{d}\bigg)\eta_1 -\bigg(\frac{d-1}{d}\bigg)(\lambda_1-\lambda_2)\eta_2\bigg\}
    \end{align*}
    which satisfies the functional form (\ref{eq:prior_form}). This is the general form of the conjugate prior, with
    \begin{align*}
        \theta_1 &= \alpha_2-1\\
        \theta_2 &= \alpha_1-1\\
        \theta_3 &= \frac{\lambda_1+(d-1)\lambda_2}{d}\\
        \theta_4 &= \bigg(\frac{d-1}{d}\bigg)(\lambda_1-\lambda_2).
    \end{align*}
\end{proof}

\begin{proof}[Proof of Theorem \ref{thm:cs_var_prior}]
    Our proof proceeds analogously to that of Theorem \ref{thm:cs_prec_prior}. For the sake of notational recycling, let $\boldsymbol\sigma = \textbf{C}_d\boldsymbol Z$. We shall derive the distribution of $\boldsymbol{\sigma}$ via change-of-variables and verify that it possesses the form (\ref{eq:prior_form2}). We have
    \begin{equation*}
        f_{\boldsymbol{Z}}(\boldsymbol z; \boldsymbol\alpha, \boldsymbol\lambda) = \frac{\lambda_1^{\alpha_1}\lambda_2^{\alpha_2}}{\Gamma(\alpha_1)\Gamma(\alpha_2)} z_1^{-\alpha_1-1} z_2^{-\alpha_2-1}\exp\bigg\{-\frac{\lambda_1}{z_1}-\frac{\lambda_2}{z_2}\bigg\}
    \end{equation*}
    on $\mathbb R^2_{>0}$ and inverse transformation $\boldsymbol Z = \textbf{C}_d^{-1}\boldsymbol\sigma$; i.e.,
    \begin{align*}
        z_1 &= \frac{1}{d}\big(\sigma_1 + (d-1)\sigma_2\big)\\
        z_2 &= \bigg(\frac{d-1}{d}\bigg)(\sigma_1-\sigma_2).
    \end{align*}
    Hence, change-of-variables gives us
    \begin{align*}
        f_{\boldsymbol\sigma}(\boldsymbol\sigma; \boldsymbol\alpha, \boldsymbol\lambda) = \ &f_{\boldsymbol Y}(\textbf{C}_d^{-1}\boldsymbol\eta; \boldsymbol\alpha, \boldsymbol\lambda) \big||\textbf{C}_d|\big|^{-1}\\
        =\ &\big||\textbf{C}_d|\big|^{-1} \frac{\lambda_1^{\alpha_1}\lambda_2^{\alpha_2}}{\Gamma(\alpha_1)\Gamma(\alpha_2)}\\
        &\times \frac{d^{\alpha_1+\alpha_2+2}}{(d-1)^{\alpha_2+1}}(\sigma_1-\sigma_2)^{-\alpha_2-1}\big(\sigma_1+(d-1)\sigma_2\big)^{-\alpha_1-1}\\
        &\times \exp\bigg\{-\frac{d\lambda_1}{\sigma_1+(d-1)\sigma_2} - \frac{\big(\frac{d}{d-1}\big)\lambda_2}{\sigma_1-\sigma_2}\bigg\}.
    \end{align*}
    Simplification yields
    \begin{align*}
        f_{\boldsymbol\sigma}(\boldsymbol\sigma; \boldsymbol\alpha, \boldsymbol\lambda) = \ &\frac{d^{\alpha_1+\alpha_2+1}}{(d-1)^{\alpha_2}} \times \frac{\lambda_1^{\alpha_1}\lambda_2^{\alpha_2}}{\Gamma(\alpha_1)\Gamma(\alpha_2)} (\sigma_1-\sigma_2)^{-\alpha_2-1}\big(\sigma_1+(d-1)\sigma_2\big)^{-\alpha_1-1}\\
        &\times \exp\bigg\{-\frac{d\lambda_1}{\sigma_1+(d-1)\sigma_2} - \frac{\big(\frac{d}{d-1}\big)\lambda_2}{\sigma_1-\sigma_2}\bigg\}.
    \end{align*}
    which satisfies the functional form (\ref{eq:prior_form2}). This is the general form of the conjugate prior, with
    \begin{align*}
        \theta_1 &= \alpha_2+1\\
        \theta_2 &= \alpha_1+1\\
        \theta_3 &= \bigg(\frac{d}{d-1}\bigg)\lambda_2\\
        \theta_4 &= d\lambda_1.
    \end{align*}
\end{proof}

\begin{proof}[Proof of Theorem \ref{thm:prec_prior_unknown_mean}]
    Define $\textbf B := (\frac{\beta_1}{d}-\frac{\beta_2}{d(d-1)})\textbf{I} + \frac{\beta_2}{d(d-1)}\textbf{11}^\top$. The logarithm of the proposed prior density for $(\boldsymbol\mu, \boldsymbol{\eta})$, after some massaging, can be written up to an additive constant only dependent upon the hyperparameters as
    \begin{equation}\label{nw_exp_term}
        -m_{\boldsymbol\mu}\text{tr}\big(\boldsymbol{\mathcal H}(\boldsymbol\mu-\boldsymbol\nu)(\boldsymbol\mu-\boldsymbol\nu)^\top\big) +\frac{1}{2}\log|2m_{\boldsymbol\mu}\boldsymbol{\mathcal H}| - \text{tr}(\boldsymbol{\mathcal H}\textbf{B}) + \frac{m_{\boldsymbol{\mathcal H}}}{2}\log|\boldsymbol{\mathcal H}|;
    \end{equation}
    we observe thus how the exponential term may be written similarly to that of the aforementioned normal-Wishart prior. Likewise the log-likelihood can be written similarly as 
    \begin{equation}\label{ll_exp_term}
        -\text{tr}\big(\boldsymbol{\mathcal H}\sum_{i=1}^n(\boldsymbol x_i-\boldsymbol\mu)(\boldsymbol x_i-\boldsymbol\mu)^\top\big) + \frac{n}{2}\log|\boldsymbol{\mathcal H}|.
    \end{equation}
    To derive the exponential term of the (non-normalized) posterior for $(\boldsymbol\mu,\boldsymbol\eta)$ we add (\ref{nw_exp_term}) and (\ref{ll_exp_term}) together since all of the omitted terms are constant with respect to both the data and parameters; due to the conjugacy of the normal-Wishart prior with the normal likelihood, this sum is exactly
    \begin{align*}
        &-(m_{\boldsymbol\mu}+n)\text{tr}\big(\boldsymbol{\mathcal H}(\boldsymbol\mu-\boldsymbol\nu_n)(\boldsymbol\mu-\boldsymbol\nu_n)^\top\big)\\ &+\frac{1}{2}\log|2(m_{\boldsymbol\mu}+n)\boldsymbol{\mathcal H}| - \text{tr}(\boldsymbol{\mathcal H}\textbf{B}_n) + \frac{m_{\boldsymbol{\mathcal H}}+n}{2}\log|\boldsymbol{\mathcal H}|
    \end{align*}
    where $\boldsymbol\nu_n$ is the updated value of $\boldsymbol\nu$ given in the statement of the theorem and
    \begin{equation*}
        \textbf{B}_n = \textbf{B} + \boldsymbol s_n + \bigg(\frac{m_{\boldsymbol\mu}n}{m_{\boldsymbol\mu}+n}\bigg) (\boldsymbol{\overline{x}} - \boldsymbol\nu)(\boldsymbol{\overline{x}} - \boldsymbol\nu)^\top.
    \end{equation*}
    Now, using the fact that $\boldsymbol{\mathcal H}$ is compound symmetric, we have via Lemma \ref{cs_trace_lemma} that $\text{tr}(\boldsymbol{\mathcal H} \textbf{B}_n) = \text{tr}(\boldsymbol{\mathcal H}\textbf{B}^\prime_n)$, where $\textbf{B}^\prime_n$ is the nearest compound symmetric matrix to $\textbf{B}_n$ obtained via Lemma \ref{optimal_CS_mat}, easily obtained as $(\frac{\beta_1^{(n)}}{d}-\frac{\beta_2^{(n)}}{d(d-1)})\textbf{I} + \frac{\beta_2^{(n)}}{d(d-1)}\textbf{11}^\top$ where $\beta_1^{(n)}$ and $\beta_2^{(n)}$ are the updated hyperparameters given in the statement of the theorem.
    
    Thus the sum of (\ref{nw_exp_term}) and (\ref{ll_exp_term}) is exactly the exponential term of the joint density of $(\boldsymbol\mu,\boldsymbol\eta)$ parameterized with the updated hyperparameters.   
\end{proof}

\begin{proof}[Proof of Theorem \ref{thm:var_prior_unknown_mean}]
    If we define $Y_1 = 2d^2 Z_1^{-1}$ and $Y_2 = \frac{2d^2}{(d-1)^2}Z_2^{-1}$, we are precisely in the setting of the previous theorem, with
    \begin{align*}
        m_{\boldsymbol{\mathcal{H}}} & = m_{\boldsymbol{\Sigma}}\\
        \beta_1 &= \frac{\big(2d^3-(d-1)\big)\lambda_1+(d-1)^3\lambda_2}{2d^3}\\
        \beta_2 &= \frac{(d-1)\lambda_1-(d-1)^3\lambda_2}{2d^3}.        
    \end{align*}
    The new hyperparameters are updated as in the previous theorem which, after re-expressing them in terms of $m_{\boldsymbol{\Sigma}}, \lambda_1, \text{ and } \lambda_2$ yields the updates in the theorem statement.
\end{proof}

\section{Positive Definite and Compound Symmetric Matrices}\label{prelims}

It is known that the set of $d\times d$ positive semidefinite matrices forms a cone in $\mathbbm R^{\frac{d(d+1)}{2}}$, where the dimension is the number of potentially unique entries of such matrices \cite{Hill-1987}. If one considers only the interior of this cone (itself a convex cone), we instead have the space of $d\times d$ positive definite matrices. In this article we are largely concerned with the two-dimensional subset of the latter cone associated with compound symmetric matrices; we call this the \textit{($d$-th) compound symmetric cone} and denote its projection onto $\mathbbm R^2$ as
\begin{equation*}
    \mathcal C_d := \{(x,y)\ : \ x>0, \frac{-x}{d-1}<y<x\}.
\end{equation*}
One immediately observes that $(x,y)\in \mathcal C_d$ iff the matrix $(x-y)\textbf I_d +y\boldsymbol 1_d\boldsymbol 1_d^\top$ is compound symmetric. We also consider an open two-dimensional rectangle (which we call the \textit{($d$-th) compound symmetric rectangle}) isomorphic to $\mathcal C_d$:
\begin{equation*}
    \mathcal R_d := \{(x,y) \ : \ x>0, \frac{-1}{d-1}< y < 1\}.
\end{equation*}
We also see that $(x,y)\in \mathcal R_d$ iff $x\big((1-y)\textbf I_d + y\boldsymbol 1_d\boldsymbol 1_d^\top\big)$ is compound symmetric. That $\mathcal R_d $ and $\mathcal C_d$ are isomorphic is obvious, since one can easily define $(x,y)\mapsto(x,\frac{y}{x})$, with inverse $(x,y)\mapsto(x,xy)$, which is a bijection between the two sets.

Key to our derivation of the marginal and conditional priors of $\eta_1$ and $\eta_2\ | \ \eta_1$ is the following lemma, which bounds the average off-diagonal entry of a positive definite matrix in terms of the average diagonal entry.
\begin{lemma}\label{trace_inequality}
    If $\textbf B \in \mathbbm R^{d\times d}$ ($d\geq2$) is positive definite, then 
    \begin{equation*}
        -\frac{\text{tr}(\textbf B)}{d(d-1)} < \frac{\boldsymbol 1^\top \textbf B \boldsymbol 1 - \text{tr}(\textbf B)}{d(d-1)} < \frac{\text{tr}(\textbf B)}{d}.
    \end{equation*}
\end{lemma}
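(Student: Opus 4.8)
The plan is to treat the two inequalities separately. Writing $t = \text{tr}(\textbf B)$ and $q = \boldsymbol 1^\top \textbf B \boldsymbol 1$ and clearing the common positive denominator $d(d-1)$, the asserted chain is equivalent to the pair of claims $q > 0$ and $q < d\,t$. So the first step is this reduction; then each half is handled by a one-line positivity argument.

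For the left-hand inequality $-t/(d(d-1)) < (q-t)/(d(d-1))$, clearing denominators leaves $0 < q$, which is immediate: since $\textbf B$ is positive definite and $\boldsymbol 1_d \neq \boldsymbol 0$, we have $\boldsymbol 1^\top \textbf B \boldsymbol 1 > 0$. For the right-hand inequality $(q - t)/(d(d-1)) < t/d$, clearing denominators reduces it to $\boldsymbol 1^\top \textbf B \boldsymbol 1 < d\,\text{tr}(\textbf B)$, and the key move I would make is to write this as a trace identity:
\begin{equation*}
    d\,\text{tr}(\textbf B) - \boldsymbol 1^\top \textbf B \boldsymbol 1 = \text{tr}\big(\textbf B (d\textbf I_d - \boldsymbol 1_d \boldsymbol 1_d^\top)\big).
\end{equation*}
Here $d\textbf I_d - \boldsymbol 1_d \boldsymbol 1_d^\top$ is positive semidefinite (its eigenvalues are $0$ on the span of $\boldsymbol 1_d$ and $d$ with multiplicity $d-1$) and, crucially, nonzero since $d \geq 2$. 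I would then invoke the elementary fact that the trace of the product of a positive definite matrix with a nonzero positive semidefinite matrix is strictly positive: expanding the semidefinite factor in spectral form $\sum_k \lambda_k \boldsymbol v_k \boldsymbol v_k^\top$ with some $\lambda_k > 0$, the trace equals $\sum_k \lambda_k \boldsymbol v_k^\top \textbf B \boldsymbol v_k > 0$. This gives strict positivity of the displayed difference, which is exactly the claim. (An equivalent route: with $\boldsymbol u_i = \textbf B^{1/2} \boldsymbol e_i$ one has $\boldsymbol 1^\top \textbf B \boldsymbol 1 = \|\sum_i \boldsymbol u_i\|^2 \leq d \sum_i \|\boldsymbol u_i\|^2 = d\,\text{tr}(\textbf B)$ by Cauchy--Schwarz, with equality only if all $\boldsymbol u_i$ coincide, which is impossible since $\textbf B^{1/2}$ is invertible and $d \geq 2$.)

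I do not anticipate a genuine obstacle here; the only point that requires a little care is obtaining the second inequality in strict rather than non-strict form, and that is precisely what the nonzeroness of $d\textbf I_d - \boldsymbol 1_d \boldsymbol 1_d^\top$ (equivalently, the hypothesis $d \geq 2$) together with the positive-definiteness of $\textbf B$ delivers.
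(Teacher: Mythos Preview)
Your argument is correct. Both halves reduce exactly as you describe, the identity $d\,\text{tr}(\textbf B) - \boldsymbol 1^\top \textbf B \boldsymbol 1 = \text{tr}\big(\textbf B (d\textbf I_d - \boldsymbol 1_d \boldsymbol 1_d^\top)\big)$ holds, and the strictness follows from $d\textbf I_d - \boldsymbol 1_d \boldsymbol 1_d^\top$ being nonzero positive semidefinite together with $\textbf B$ positive definite.

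Your route differs from the paper's. The paper handles the lower bound the same way you do, but for the upper bound it spectrally decomposes $\textbf B = \textbf U_{\textbf B}\boldsymbol\Lambda_{\textbf B}\textbf U_{\textbf B}^\top$ and writes $\boldsymbol 1^\top \textbf B \boldsymbol 1 / d = \sum_j \lambda_j \big(\boldsymbol u_j^\top \tfrac{\boldsymbol 1}{\sqrt d}\big)^2$, observing that each squared cosine is at most $1$ and that equality cannot hold simultaneously for all $j$ (else the eigenvectors would be linearly dependent), whence the sum is strictly less than $\sum_j \lambda_j = \text{tr}(\textbf B)$. So the paper diagonalizes $\textbf B$ and compares against $\boldsymbol 1$, whereas you diagonalize $d\textbf I_d - \boldsymbol 1_d \boldsymbol 1_d^\top$ (implicitly) and compare against $\textbf B$. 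Your approach is a bit more economical and isolates the mechanism cleanly as ``$\text{tr}(\text{PD}\cdot\text{nonzero PSD})>0$''; the paper's version is more geometric and makes the role of the direction $\boldsymbol 1/\sqrt d$ explicit. Your Cauchy--Schwarz alternative is also valid and is essentially the same inequality viewed through $\textbf B^{1/2}$.
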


\begin{proof}
    From the definition of positive definiteness, we have that $0 < \boldsymbol 1^\top \textbf B\boldsymbol 1$. Subtracting $\text{tr}(\textbf B)$ from both sides and dividing through by $d(d-1)$ yields the lower bound.

    To obtain the upper bound, we use the Spectral Theorem to write $\textbf B = \textbf{U}_{\textbf B} \boldsymbol{\Lambda}_{\textbf B} \textbf{U}_{\textbf B}^\top$ in which the rows of $\textbf U_{\textbf B}$ are unit eigenvectors of $\textbf B$ and $\boldsymbol\Lambda_{\textbf{B}} = \text{diag}(\lambda_1, \dots, \lambda_d)$ is a diagonal matrix containing the eigenvalues of $\textbf{B}$, subsequently giving us
    \begin{align*}
       \frac{\boldsymbol 1^\top\textbf B\boldsymbol 1}{d} &= \bigg(\frac{\boldsymbol 1}{\sqrt d}\bigg)^\top\textbf{U}_{\textbf B} \boldsymbol{\Lambda}_{\textbf B} \textbf{U}_{\textbf B}^\top \bigg(\frac{\boldsymbol 1}{\sqrt d}\bigg)\\
        &= \bigg(\textbf{U}_{\textbf B}^\top\frac{\boldsymbol 1}{\sqrt d}\bigg)^\top\boldsymbol{\Lambda}_{\textbf B}\bigg(\textbf{U}_{\textbf B}^\top\frac{\boldsymbol 1}{\sqrt d}\bigg) \\
        &= \sum_{j=1}^d\lambda_{j}\bigg(\boldsymbol u_j^\top\frac{\boldsymbol1}{\sqrt d}\bigg)^2.
    \end{align*}
    Note that $\frac{\boldsymbol 1}{\sqrt d}$ and the $\boldsymbol u_j$ are all unit vectors, hence the squared dot product in each summand is 
    \begin{equation}\label{cos_ineq}
        \text{cos}^2(\theta(\boldsymbol u_j, \frac{\boldsymbol 1}{\sqrt d}))\leq 1,
    \end{equation}
    where $\theta(\boldsymbol u, \boldsymbol v)$ is the measure of the angle between vectors $\boldsymbol u$ and $\boldsymbol v$, with equality only occurring when $\boldsymbol u_j^\top\frac{\boldsymbol1}{\sqrt d} = \pm 1$. On the unit sphere in $\mathbbm R^d$ this only occurs when $\boldsymbol u_j = \pm \frac{\boldsymbol 1}{\sqrt d}$, and if this is true for two or more of the $\boldsymbol u_j$ then those eigenvectors are not linearly independent, which would contradict our first assumption that $\textbf B$ be positive definite, i.e., necessarily full-rank with linearly independent eigenvectors. Therefore the inequality (\ref{cos_ineq}) is strict for at least one of the $\boldsymbol u_j$.

    We now have
    \begin{align*}
        \text{tr}(\textbf B) &= \sum_{j=1}^d\lambda_j\\
        &> \sum_{j=1}^d\lambda_j \bigg(\boldsymbol u_j^\top\frac{\boldsymbol1}{\sqrt d}\bigg)^2\\
        &= \frac{\boldsymbol 1^\top\textbf B\boldsymbol 1}{d}.
    \end{align*}
    Subtracting $\frac{\text{tr}(\boldsymbol B)}{d}$ from both sides and dividing by $d-1$ yields the desired upper bound.
\end{proof}

Two immediate consequences follow. The first is that
\begin{equation*}
    -\text{tr}(\textbf B) < \boldsymbol 1^\top \textbf B \boldsymbol 1 - \text{tr}(\textbf B) < (d-1)\text{tr}(\textbf B).
\end{equation*}
The second is that a compound symmetric matrix may be naturally constructed from any positive definite matrix. When we construct the conjugate prior for the compound-symmetric half-precision on $\mathcal C_d$ in the following section as a linear submodel of the unconstrained (i.e., arbitrarily positive definite) case, this result permits us to begin with any positive definite rate matrix $\textbf B$ in the original Wishart prior; said matrix need not be compound symmetric itself.

\begin{lemma}\label{optimal_CS_mat}
    Suppose $\textbf B \in \mathbbm R^{d\times d}$ ($d\geq 2$) is positive definite, and define $\hat b_1 = \frac{\text{tr}(\textbf B)}{d}$ and $\hat b_2 = \frac{\boldsymbol 1^\top \textbf B \boldsymbol 1 - \text{tr}(\textbf B)}{d(d-1)}$. The matrix
    \begin{equation*}
        \textbf{\^B} := (\hat b_1-\hat b_2)\textbf I_d + \hat b_2\boldsymbol 1_d\boldsymbol 1_d^\top
    \end{equation*}
    solves the optimization problem
    \begin{align*}
        \underset{\textbf B'}{\text{min}} \ &\|\textbf B-\textbf B'\|^2_F\\
        \text{s.t.} \ & \textbf B' \ \text{is compound symmetric}
    \end{align*}
    where $\|\cdot\|_F$ denotes the usual Frobenius norm.
\end{lemma}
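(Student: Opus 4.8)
The plan is to reduce this to a standard least-squares (orthogonal projection) argument onto the two-dimensional linear subspace of compound symmetric matrices, which we denote $\mathcal{S} := \{a\mathbf{I}_d + b\mathbf{1}_d\mathbf{1}_d^\top : a \in \mathbb{R}, b \in \mathbb{R}\}$. Since the Frobenius norm is the norm induced by the trace inner product $\langle \mathbf{A}, \mathbf{B}\rangle_F = \text{tr}(\mathbf{A}^\top\mathbf{B})$, and $\mathcal{S}$ is a finite-dimensional (hence closed) linear subspace of $\mathbb{R}^{d\times d}$, the minimizer is unique and is characterized by the orthogonality condition $\langle \mathbf{B} - \mathbf{\^B}, \mathbf{V}\rangle_F = 0$ for every $\mathbf{V} \in \mathcal{S}$. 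So the real content is just to verify that the specific $\mathbf{\^B}$ given by $\hat b_1, \hat b_2$ satisfies this orthogonality condition against a spanning set of $\mathcal{S}$.

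The steps I would carry out, in order: First, note $\mathcal{S}$ is spanned by $\mathbf{I}_d$ and $\mathbf{1}_d\mathbf{1}_d^\top$ (or equivalently by $\mathbf{I}_d$ and $\mathbf{J}_d := \mathbf{1}_d\mathbf{1}_d^\top - \mathbf{I}_d$, the ``hollow'' all-ones matrix, which is a slightly more convenient orthogonal-ish basis). Second, write $\mathbf{\^B} = \hat b_1 \mathbf{I}_d + \hat b_2 \mathbf{J}_d$ after rewriting $(\hat b_1 - \hat b_2)\mathbf{I}_d + \hat b_2\mathbf{1}_d\mathbf{1}_d^\top = \hat b_1\mathbf{I}_d + \hat b_2(\mathbf{1}_d\mathbf{1}_d^\top - \mathbf{I}_d)$. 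Third, impose $\langle \mathbf{B} - \mathbf{\^B}, \mathbf{I}_d\rangle_F = 0$, which reads $\text{tr}(\mathbf{B}) = \text{tr}(\mathbf{\^B}) = \hat b_1 d$ (using $\text{tr}(\mathbf{I}_d) = d$, $\text{tr}(\mathbf{J}_d) = 0$, $\langle\mathbf{I}_d,\mathbf{J}_d\rangle_F = 0$), giving exactly $\hat b_1 = \text{tr}(\mathbf{B})/d$. Fourth, impose $\langle \mathbf{B} - \mathbf{\^B}, \mathbf{J}_d\rangle_F = 0$: since $\langle \mathbf{B}, \mathbf{J}_d\rangle_F = \text{tr}(\mathbf{B}\mathbf{J}_d) = \mathbf{1}^\top\mathbf{B}\mathbf{1} - \text{tr}(\mathbf{B})$ (the sum of off-diagonal entries) and $\langle \mathbf{J}_d, \mathbf{J}_d\rangle_F = \|\mathbf{J}_d\|_F^2 = d(d-1)$, this forces $\hat b_2 = (\mathbf{1}^\top\mathbf{B}\mathbf{1} - \text{tr}(\mathbf{B}))/(d(d-1))$, matching the statement. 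Fifth, invoke the projection theorem (or just expand $\|\mathbf{B} - \mathbf{B}'\|_F^2 = \|\mathbf{B} - \mathbf{\^B}\|_F^2 + \|\mathbf{\^B} - \mathbf{B}'\|_F^2$ using orthogonality of $\mathbf{B} - \mathbf{\^B}$ to $\mathbf{\^B} - \mathbf{B}' \in \mathcal{S}$) to conclude $\mathbf{\^B}$ is the unique minimizer. One small remark: strictly speaking the constraint set in the lemma is ``$\mathbf{B}'$ compound symmetric,'' i.e. $\mathbf{B}' \in \mathcal{C}_d$ which is the \emph{open cone}, not the closed subspace $\mathcal{S}$; I would note that since $\mathbf{B}$ is positive definite, Lemma \ref{trace_inequality} guarantees $(\hat b_1, \hat b_2) \in \mathcal{C}_d$, so the unconstrained-over-$\mathcal{S}$ minimizer already lies in the open cone and hence solves the stated problem as well.

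The argument is essentially routine once phrased as projection, so I do not anticipate a genuine obstacle; the only mild subtlety is the open-cone-versus-closed-subspace point just mentioned, which is dispatched immediately by the already-proven Lemma \ref{trace_inequality}. A secondary bookkeeping point is getting the two natural parametrizations — $(a,b)$ in $a\mathbf{I}_d + b\mathbf{1}_d\mathbf{1}_d^\top$ versus $(\hat b_1, \hat b_2)$ in $(\hat b_1 - \hat b_2)\mathbf{I}_d + \hat b_2\mathbf{1}_d\mathbf{1}_d^\top$ — lined up correctly, but that is purely algebraic. I would present the proof in the orthogonality form rather than by differentiating $\|\mathbf{B} - (a\mathbf{I}_d + b\mathbf{1}_d\mathbf{1}_d^\top)\|_F^2$ with respect to $a$ and $b$ and solving the normal equations, though the latter is an equivalent and equally short route.
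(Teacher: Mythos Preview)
Your proposal is correct and follows essentially the same approach as the paper: relax the constraint to the full two-dimensional linear subspace, identify the minimizer there, and then invoke Lemma~\ref{trace_inequality} to confirm that this minimizer actually lies in the open cone $\mathcal C_d$. The only cosmetic difference is that the paper takes the calculus route you mention at the end---expanding $\|\mathbf B - (b_1-b_2)\mathbf I_d - b_2\mathbf 1_d\mathbf 1_d^\top\|_F^2$ entrywise and setting partial derivatives to zero---rather than phrasing it as orthogonal projection against the basis $\{\mathbf I_d,\mathbf J_d\}$; as you note, these are equivalent.
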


\begin{proof}
    We consider a relaxed version of the optimization problem and show that its solution is in fact compound symmetric; let us solve instead
    \begin{align*}
        \underset{b_1,b_2}{\text{min}} \ &\|\textbf B-(b_1-b_2)\textbf I_d - b_2\boldsymbol1_d\boldsymbol1_d^\top\|_F^2\\
        \text{s.t.}\ &b_1,b_2\in \mathbbm R.
    \end{align*}
    Let $\mathcal O(b_1,b_2)$ denote the objective function; from the definition of the Frobenius norm, we have
    \begin{equation*}
        \mathcal O(b_1,b_2) = \sum_{i=1}^d (b_{ii}-b_1)^2 + \sum_{i\neq j} (b_{ij}-b_2)^2,
    \end{equation*}
    which is clearly convex in $(b_1,b_2)$, with partial derivatives proportional to
    \begin{align*}
        \frac{\partial\mathcal O(b_1,b_2)}{\partial b_1} &\propto \text{tr}(\textbf B) - db_1\\
        \frac{\partial\mathcal O(b_1,b_2)}{\partial b_2} &\propto \boldsymbol 1_d^\top \textbf B\boldsymbol 1_d - \text{tr}(\textbf B) - d(d-1)b_2.
    \end{align*}
    Setting these equal to 0 and solving for $b_1$ and $b_2$ gives the $\hat b_1$ and $\hat b_2$ in the statement of the lemma. Lemma \ref{trace_inequality} yields the fact that $(\hat b_1,\hat b_2)\in \mathcal C_d$, which is true iff the above $\textbf{\^B}$ is compound symmetric.
\end{proof}
Heuristically, the nearest compound-symmetric matrix to a positive definite matrix (in Frobenius norm) is that defined by the latter's average diagonal entry and average off-diagonal entry.

The following result assists our derivation of the conjugate prior for the parameters of (\ref{eq:model}) when both $\boldsymbol\mu$ and $\boldsymbol{\mathcal{H}}$ are unknown.
\begin{lemma}\label{cs_trace_lemma}
    If $\textbf{B},\textbf{\^B}$ are as in Lemma \ref{optimal_CS_mat} and $\textbf{A}$ is $d\times d$ compound symmetric, then $\text{tr}(\textbf{AB})= \text{tr}(\textbf{A\^B})$.
\end{lemma}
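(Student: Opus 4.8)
\textbf{Proof proposal for Lemma \ref{cs_trace_lemma}.}

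The plan is to exploit the fact that the trace is a linear functional on matrices, combined with the observation that $\textbf{A}$ compound symmetric means $\textbf{A}$ is determined by only two numbers (its common diagonal value $a_1$ and its common off-diagonal value $a_2$). First I would write $\textbf{A} = (a_1 - a_2)\textbf{I}_d + a_2 \boldsymbol 1_d \boldsymbol 1_d^\top$, so that for any $d \times d$ matrix $\textbf{M}$ we have $\text{tr}(\textbf{AM}) = (a_1 - a_2)\,\text{tr}(\textbf{M}) + a_2\,\text{tr}(\boldsymbol 1_d \boldsymbol 1_d^\top \textbf{M}) = (a_1 - a_2)\,\text{tr}(\textbf{M}) + a_2\,\boldsymbol 1_d^\top \textbf{M} \boldsymbol 1_d$, using the cyclic property of the trace. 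Thus $\text{tr}(\textbf{AM})$ depends on $\textbf{M}$ only through the two scalars $\text{tr}(\textbf{M})$ and $\boldsymbol 1_d^\top \textbf{M} \boldsymbol 1_d$.

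The key step is then to check that $\textbf{B}$ and $\textbf{\^B}$ agree on both of these scalars. By the definition of $\textbf{\^B}$ in Lemma \ref{optimal_CS_mat}, its diagonal entries are all $\hat b_1 = \text{tr}(\textbf{B})/d$, so $\text{tr}(\textbf{\^B}) = d \hat b_1 = \text{tr}(\textbf{B})$. For the quadratic form, $\boldsymbol 1_d^\top \textbf{\^B} \boldsymbol 1_d = (\hat b_1 - \hat b_2)\boldsymbol 1_d^\top \boldsymbol 1_d + \hat b_2 (\boldsymbol 1_d^\top \boldsymbol 1_d)^2 = d(\hat b_1 - \hat b_2) + d^2 \hat b_2 = d \hat b_1 + d(d-1)\hat b_2$, and substituting $\hat b_1 = \text{tr}(\textbf B)/d$ and $\hat b_2 = (\boldsymbol 1_d^\top \textbf B \boldsymbol 1_d - \text{tr}(\textbf B))/(d(d-1))$ gives exactly $\text{tr}(\textbf{B}) + (\boldsymbol 1_d^\top \textbf{B} \boldsymbol 1_d - \text{tr}(\textbf B)) = \boldsymbol 1_d^\top \textbf{B} \boldsymbol 1_d$. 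Hence both relevant scalars coincide, and applying the formula from the first paragraph with $\textbf{M} = \textbf{B}$ and $\textbf{M} = \textbf{\^B}$ yields $\text{tr}(\textbf{AB}) = \text{tr}(\textbf{A\^B})$.

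This argument is essentially free of obstacles — it is a two-line linear-algebra computation once the reduction to the two scalars is made. The only point requiring a modicum of care is the bookkeeping in verifying $\boldsymbol 1_d^\top \textbf{\^B} \boldsymbol 1_d = \boldsymbol 1_d^\top \textbf{B} \boldsymbol 1_d$, i.e., keeping the factors of $d$ and $d-1$ straight; but this is exactly the content of how $\hat b_1$ and $\hat b_2$ were defined in Lemma \ref{optimal_CS_mat} (averaging the diagonal and off-diagonal entries respectively), so it amounts to recognizing that $\textbf{\^B}$ was constructed precisely to preserve these two sufficient-statistic-like quantities. One could alternatively phrase the whole proof as: $\textbf{B} - \textbf{\^B}$ is orthogonal (in the Frobenius/trace inner product) to the subspace of compound symmetric matrices — which is the first-order optimality condition from the proof of Lemma \ref{optimal_CS_mat} — and $\textbf{A}$ lies in that subspace, so $\text{tr}(\textbf{A}^\top(\textbf{B} - \textbf{\^B})) = 0$; since $\textbf{A}$ is symmetric this is $\text{tr}(\textbf{A}\textbf{B}) - \text{tr}(\textbf{A}\textbf{\^B}) = 0$. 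I would present the direct computation as the main proof and perhaps remark on the orthogonality interpretation.
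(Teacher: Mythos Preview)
Your proposal is correct and is essentially the same argument as the paper's: both write $\textbf{A} = (a_1 - a_2)\textbf{I}_d + a_2\boldsymbol 1_d\boldsymbol 1_d^\top$, reduce $\text{tr}(\textbf{AM})$ to the two scalars $\text{tr}(\textbf{M})$ and $\boldsymbol 1_d^\top \textbf{M}\boldsymbol 1_d$, and then verify that $\textbf{B}$ and $\textbf{\^B}$ agree on both. Your added orthogonality remark is a nice conceptual gloss that the paper does not include, but the main line of proof is identical.
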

\begin{proof}
    Let $(a_1,a_2)\in \mathcal C_d$ denote the entries of $\textbf{A}$. Note that $\text{tr}(\textbf{B}) = \text{tr}(\textbf{\^B})$ and $\text{tr}(\textbf{B11}^\top) = \boldsymbol 1^\top\textbf{B}\boldsymbol 1 = \boldsymbol 1^\top\textbf{\^B}\boldsymbol 1= \text{tr}(\textbf{\^B11}^\top)$. We have
    \begin{align*}
        \text{tr}(\textbf{AB}) &= \text{tr}\bigg(\big((a_1-a_2)\textbf{I}+a_2\boldsymbol{11}^\top\big)\textbf{B}\bigg)\\
        &= (a_1-a_2)\text{tr}(\textbf{B}) + a_2\text{tr}(\textbf{B}\boldsymbol{11}^\top)\\
        &= (a_1-a_2)\text{tr}(\textbf{\^B}) + a_2\text{tr}(\textbf{\^B}\boldsymbol{11}^\top)\\
        &= \text{tr}(\textbf{A\^B})
    \end{align*}
    as desired.
\end{proof}

We also invoke the Matrix Determinant Lemma (ex. 1.3.24 of \cite{Horn-Johnson}) to obtain the determinant of a positive definite compound symmetric matrix.

\begin{lemma}\label{mat_det_lem}
    If $\textbf B = (b_1-b_2)\textbf I + b_2\boldsymbol 1\boldsymbol 1^\top$ is $d\times d$ positive definite compound symmetric, then
    \begin{equation*}
        |\textbf B| = (b_1-b_2)^{d-1}(b_1+(d-1)b_2)
    \end{equation*}
\end{lemma}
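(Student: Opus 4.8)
The plan is to apply the Matrix Determinant Lemma in its simplest rank-one form, exactly as the statement advertises. Write $\textbf{B} = \textbf{A} + \boldsymbol{u}\boldsymbol{v}^\top$ with $\textbf{A} = (b_1-b_2)\textbf{I}$, $\boldsymbol{u} = b_2\boldsymbol{1}$ and $\boldsymbol{v} = \boldsymbol{1}$. Since $\textbf{B}$ is positive definite and compound symmetric, its eigenvalues are $b_1-b_2$ (with multiplicity $d-1$, from any vector orthogonal to $\boldsymbol{1}$) and $b_1+(d-1)b_2$ (from $\boldsymbol{1}$ itself), both strictly positive; in particular $b_1-b_2>0$, so $\textbf{A}$ is invertible with $\textbf{A}^{-1} = (b_1-b_2)^{-1}\textbf{I}$ and $|\textbf{A}| = (b_1-b_2)^d$. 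This invertibility is the only hypothesis needed to legitimately invoke the lemma.

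Next I would simply substitute into $|\textbf{A} + \boldsymbol{u}\boldsymbol{v}^\top| = |\textbf{A}|\,(1 + \boldsymbol{v}^\top\textbf{A}^{-1}\boldsymbol{u})$. The correction factor is
\begin{equation*}
    1 + \boldsymbol{1}^\top\frac{\textbf{I}}{b_1-b_2}\,b_2\boldsymbol{1} = 1 + \frac{b_2 d}{b_1-b_2} = \frac{b_1 + (d-1)b_2}{b_1-b_2},
\end{equation*}
using $\boldsymbol{1}^\top\boldsymbol{1} = d$. Multiplying by $|\textbf{A}| = (b_1-b_2)^d$ gives $(b_1-b_2)^{d-1}(b_1+(d-1)b_2)$, as claimed.

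There is essentially no obstacle here; the only point requiring a sentence of care is justifying that $b_1-b_2\neq 0$ so that $\textbf{A}^{-1}$ exists, which is immediate from positive definiteness (or, alternatively, from the constraint $-\tfrac{b_1}{d-1}<b_2<b_1$ in the definition of the compound-symmetric cone). If one preferred to avoid even that caveat, an equivalent route is to bypass the Matrix Determinant Lemma entirely and compute $|\textbf{B}|$ as the product of its eigenvalues directly, noting that $\boldsymbol{1}$ is an eigenvector with eigenvalue $b_1+(d-1)b_2$ and that the orthogonal complement of $\boldsymbol{1}$ is a $(d-1)$-dimensional eigenspace with eigenvalue $b_1-b_2$; the product of all $d$ eigenvalues again yields $(b_1-b_2)^{d-1}(b_1+(d-1)b_2)$, and this argument does not even require positive definiteness.
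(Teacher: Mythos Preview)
Your proposal is correct and follows exactly the route the paper indicates: the paper does not spell out a proof but simply invokes the Matrix Determinant Lemma (citing Horn--Johnson), and you have carried out precisely that rank-one computation, together with the necessary check that $b_1-b_2\neq 0$. The alternative eigenvalue argument you mention at the end is also sound and, as you note, slightly more general.
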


Finally, since we are interested in inference for a compound symmetric variance covariance matrix of a Gaussian model \textit{vis-\`a-vis} inference for the half precision, we also give the inverse of such matrices, easily computed via the Sherman-Morrison-Woodbury formula (sec. 0.7.4 of \cite{Horn-Johnson}) for the inverse of a rank-1 updated matrix.

\begin{lemma}\label{cs_inverse_entries}
    If $\textbf B = (b_1-b_2)\textbf I + b_2\boldsymbol 1\boldsymbol 1^\top$ is $d\times d$ compound symmetric, then $\textbf B^{-1} = (b_1'-b_2')\textbf I + b_2'\boldsymbol 1\boldsymbol 1^\top$ is compound symmetric as well, where
    \begin{align*}
        b_1' &= \frac{b_1+(d-2)b_2}{(b_1-b_2)(b_1+(d-1)b_2)}\\
        b_2' &= -\frac{b_2}{(b_1-b_2)(b_1+(d-1)b_2)}.
    \end{align*}
\end{lemma}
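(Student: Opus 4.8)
The plan is to recognize $\textbf B = (b_1-b_2)\textbf I + b_2\boldsymbol 1\boldsymbol 1^\top$ as a rank-one perturbation of the scaled identity $a\textbf I$, with $a := b_1 - b_2$, and to invert it via the Sherman--Morrison--Woodbury formula (with $A = a\textbf I$, $u = b_2\boldsymbol 1$, $v = \boldsymbol 1$). First I would record that membership of $(b_1,b_2)$ in $\mathcal C_d$ forces $a = b_1 - b_2 > 0$ and $b_1 + (d-1)b_2 > 0$, so the scalar $1 + v^\top A^{-1} u = 1 + b_2 d/a = (b_1 + (d-1)b_2)/a$ appearing in the denominator of the correction term is nonzero (indeed positive); this is exactly the hypothesis under which the formula applies, and it is consistent with the nonvanishing of $|\textbf B|$ in Lemma \ref{mat_det_lem}.

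Next I would apply the formula and simplify. We get $\textbf B^{-1} = \tfrac1a\textbf I - \tfrac{b_2/a^2}{(a + b_2 d)/a}\,\boldsymbol 1\boldsymbol 1^\top = \tfrac1a\textbf I - \tfrac{b_2}{a(a + b_2 d)}\,\boldsymbol 1\boldsymbol 1^\top$, and using $a(a + b_2 d) = (b_1-b_2)(b_1 - b_2 + b_2 d) = (b_1-b_2)(b_1 + (d-1)b_2)$ this becomes $\textbf B^{-1} = \tfrac{1}{b_1-b_2}\textbf I - \tfrac{b_2}{(b_1-b_2)(b_1+(d-1)b_2)}\,\boldsymbol 1\boldsymbol 1^\top$. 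This already has the shape $(b_1' - b_2')\textbf I + b_2'\boldsymbol 1\boldsymbol 1^\top$; reading off $b_2' = -b_2/[(b_1-b_2)(b_1+(d-1)b_2)]$ and $b_1' - b_2' = 1/(b_1-b_2)$, then solving for $b_1'$ and combining over the common denominator $(b_1-b_2)(b_1+(d-1)b_2)$, the numerator collapses to $b_1 + (d-2)b_2$, which is the claimed expression for $b_1'$.

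Finally I would check that $\textbf B^{-1}$ is genuinely compound symmetric, i.e. that $(b_1',b_2') \in \mathcal C_d$. The clean observation is that the two ``eigenvalue-type'' quantities of $\textbf B^{-1}$ are the reciprocals of those of $\textbf B$: one computes directly $b_1' - b_2' = 1/(b_1-b_2) > 0$ and $b_1' + (d-1)b_2' = 1/(b_1+(d-1)b_2) > 0$, and these two inequalities are precisely what it means for $(b_1',b_2')$ to lie in $\mathcal C_d$ (they even force $b_1' > 0$, since $(d-1)(b_1'-b_2') + (b_1'+(d-1)b_2') = d b_1'$). Hence $\textbf B^{-1}$ is compound symmetric with the stated entries.

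I do not anticipate a real obstacle: this is a routine computation, and the only care needed is the algebraic bookkeeping when clearing denominators in $b_1'$ and being explicit that the defining inequalities of $\mathcal C_d$ are what simultaneously license the inversion and deliver closure of compound symmetry. Two fully self-contained alternatives, if one prefers to avoid the Sherman--Morrison citation, are (i) to verify $\textbf B\,\textbf B^{-1} = \textbf I$ by direct multiplication using $(\boldsymbol 1\boldsymbol 1^\top)^2 = d\,\boldsymbol 1\boldsymbol 1^\top$, or (ii) to diagonalize $\textbf B$ along the orthogonal decomposition $\mathbb R^d = \mathrm{span}(\boldsymbol 1) \oplus \boldsymbol 1^\perp$, where $\textbf B$ acts as $b_1 + (d-1)b_2$ on $\boldsymbol 1$ and as $b_1 - b_2$ on $\boldsymbol 1^\perp$, invert the two eigenvalues, and reassemble.
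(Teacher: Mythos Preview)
Your proposal is correct and follows exactly the approach the paper indicates: the lemma is stated there as an immediate consequence of the Sherman--Morrison--Woodbury formula, with no further proof given. Your write-up in fact supplies more detail than the paper does, including the explicit verification that $(b_1',b_2')\in\mathcal C_d$.
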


One observes that if the off-diagonal entry of compound symmetric $\textbf B$ is positive (negative), the off-diagonal of $\textbf B^{-1}$ is negative (positive).

\section{Details for Section \ref{sec:marg_cond}}\label{app:details}

\subsection{Notation and Terminology}

We frequently refer to the usual gamma function $\Gamma(\alpha) := \int_0^{\infty}x^{\alpha-1}\exp\{-x\}dx$ and beta function $\text{B}(\alpha,\beta) := \int_0^1 x^{\alpha-1}(1-x)^{\beta-1}dx = \frac{\Gamma(\alpha)\Gamma(\beta)}{\Gamma(\alpha+\beta)}$ for $\alpha,\beta>0$. Our analyses require knowledge of Kummer's confluent hypergeometric function \citep{Kummer-1837}, denoted by
\begin{equation*}
    {}_1F_1(\alpha,\beta,\lambda) := \sum_{n=0}^\infty \frac{(\alpha)_n}{(\beta)_n}\frac{\lambda^n}{n!}
\end{equation*}
for $\alpha,\beta>0$ and $\lambda \in \mathbbm R$, as well as Gauss' generalized hypergeometric function
\begin{equation*}
    {}_2F_1(\alpha,\beta,\gamma,\lambda) := \sum_{n=0}^\infty \frac{(\alpha)_n(\gamma)_n}{(\beta)_n}\frac{\lambda^n}{n!}
\end{equation*}
for $\alpha,\beta,\gamma> 0$ and $|\lambda|<1$. Here, $(\alpha)_n$ is the rising factorial or Pochhammer symbol, known to be equivalent to
\begin{equation*}
    (\alpha)_n = \frac{\Gamma(\alpha+n)}{\Gamma(\alpha)}.
\end{equation*}
While much has been written about both ${}_1F_1$ and ${}_2F_1$, we only require knowledge of a few of their properties, in particular that
\begin{equation}\label{Kummer_transform}
    {}_1F_1(\alpha,\alpha+\beta,\lambda) = \exp\{\lambda\}{}_1F_1(\beta,\alpha+\beta,-\lambda),
\end{equation}
a result known as Kummer's transformation (eq. 13.1.27 in \cite{AS-1964}), and that
\begin{equation}\label{spec_case}
    {}_2F_1(\alpha,\beta,\beta,\lambda) = \frac{1}{(1-\lambda)^\alpha}
\end{equation}
(eq. 15.1.8 in \cite{AS-1964}).

\subsection{Kummer-Beta Distribution}
The Kummer-beta distribution \cite{Ng-1995, Nagar-2002} generalizes the usual beta distribution to include an exponential term in the random variable which offers greater weight to one end of the unit interval to permit greater probability thereto \cite{Cordeiro-2014}. For $\alpha,\beta>0$ and $\lambda \in \mathbbm R$ one writes $X\sim KB(\alpha,\beta,\lambda)$ to indicate that the random variable $X$ has such a distribution, with density function
\begin{equation*}
    f_X(x;\alpha,\beta,\lambda) = \begin{cases}
    \frac{x^{\alpha-1}(1-x)^{\beta-1}\exp\{-\lambda x\}}{Z(\alpha,\beta,\lambda)}, \ \ x\in(0,1)\\
    0, \ \ \text{otherwise}
    \end{cases},
\end{equation*}
where the normalization constant is
\begin{equation}\label{Kummer_beta_Z}
    Z(\alpha,\beta,\lambda) = \text{B}(\alpha, \beta){}_1F_1(\alpha, \alpha+\beta,-\lambda).
\end{equation}
Note that when $\lambda=0$ then $X\sim \text{B}(\alpha, \beta)$ as a special case. We generalize this density by considering, for $a>0$ and $b\in \mathbbm R$, the random variable $Y = aX + b$, which we call a shifted/scaled Kummer-beta random variable (denoted as $Y\sim KB_{a,b}(\alpha,\beta,\gamma)$). The density function of $Y$, which can be easily gleaned from the usual change of variables, is the subject of the following lemma.
\begin{lemma}\label{noncentral_kb}
    If $X\sim KB(\alpha,\beta,\lambda)$, $a>0$, and $b\in \mathbbm R$, then the density of $Y=aX+b$ is
    \begin{equation*}
        f_Y(y;\alpha,\beta,\lambda,a,b) = \begin{cases}\frac{\exp\{\lambda\frac{b}{a}\}}{a^{\alpha+\beta-1}Z(\alpha,\beta,\lambda)}(y-b)^{\alpha-1}(a+b-y)^{\beta-1}\exp\{-\frac{\lambda}{a}y\}, \ \ y\in(b,a+b)\\
        0, \ \ \text{otherwise}
        \end{cases}
    \end{equation*}
\end{lemma}
\noindent\textbf{Proof: }
The linear relationship may be rearranged as $X=\frac{Y-b}{a}$, with Jacobian $\frac{dX}{dY} = \frac{1}{a}$. Moreover it's clear the said relationship is a bijection from $(0,1)$ to $(b,a+b)$, hence $Y$ has strictly positive support on the latter interval. Using change-of-variables we have
\begin{align*}
    f_Y(y;\alpha,\beta,\lambda,a,b) &= f_X\bigg(\frac{y-b}{a};\alpha,\beta,\lambda\bigg)\frac{1}{a}\\ &=\frac{1}{aZ(\alpha,\beta,\lambda)}\bigg(\frac{y-b}{a}\bigg)^{\alpha-1}\bigg(1-\frac{y-b}{a}\bigg)^{\beta-1}\exp\bigg\{-\lambda\bigg(\frac{y-b}{a}\bigg)\bigg\}\\
    &= \frac{\exp\{\lambda\frac{b}{a}\}}{a^{\alpha+\beta-1}Z(\alpha,\beta,\lambda)}(y-b)^{\alpha-1}(a+b-y)^{\beta-1}\exp\bigg\{-\frac{\lambda}{a}y\bigg\}
\end{align*}
on $(b,a+b)$, and 0 otherwise.\qed

That $a$ be strictly positive is hardly necessary to arrive at such a generalization. Indeed, we enforce such a restriction to avoid absolute values in the final density. If one is interested in generalizing for $a<0$, first note that Kummer's transformation (\ref{Kummer_transform}) gives $X':=1-X \sim KB(\beta,\alpha, -\gamma)$; when $a$ is negative we can see that $Y$ is a transformation of $X'$ as described above.

\subsection{Convolved-Gamma Distribution}

The marginal prior we derive for the diagonal entry of $\boldsymbol{\mathcal H}$ takes the form of a gamma density, albeit one generalized with Kummer's hypergeometric function. One writes $X\sim C\Gamma(\alpha,\beta,\lambda,\delta)$ to indicate that the random variable $X$ has a \textit{convolved-gamma} distribution with parameters $\alpha,\beta ,\lambda> 0$ and $-\infty<\delta<\beta$ and density
\begin{equation*}
    f_X(x;\alpha,\beta,\lambda,\delta) = \begin{cases}
        \frac{\beta^\alpha (\beta-\delta)^\lambda}{\Gamma(\alpha+\lambda)}x^{\alpha+\lambda-1}\exp\{-\beta x\}{}_1F_1(\lambda,\alpha+\lambda,\delta x), & \text{ if } x>0\\
        0, & \text{otherwise}
    \end{cases}.
\end{equation*}
\cite{DiSalvo-2006} obtained this density for the sum of independent $\Gamma(\alpha, \beta)$ and $\Gamma(\lambda,\beta-\delta)$ random variables, and has since been used to model the time-activity of a thyroidal imaging agent \citep{Wesolowski-2016}. This distribution includes the usual gamma density as a special case; for example, $C\Gamma(\alpha,\beta,\lambda,0) = \Gamma(\alpha+\lambda,\beta)$ and $\lim_{\alpha\to0} C\Gamma(\alpha,\beta,\lambda,\delta) = \Gamma(\lambda,\beta-\delta)$.

\subsection{Marginal and Conditional Priors of the Entries of the Half-Precision}

We observe that (\ref{eq:prec_prior_not_norm}) may be rewritten as
\begin{align*}
    \exp\{-\eta_1\beta_1\}\times \bigg((\eta_1-\eta_2)^{\frac{m(d-1)}{2}}(\eta_1 + (d-1)\eta_2)^{\frac{m}{2}}\exp\{- \eta_2\beta_2\}\bigg),
\end{align*}
where $\beta_1 := \text{tr}(\boldsymbol B)$ and $\beta_2 := \boldsymbol1^\top\boldsymbol B\boldsymbol1-\text{tr}(\boldsymbol B)$; Lemma \ref{trace_inequality} implies $-\beta_1<\beta_2<(d-1)\beta_1$. Further massaging of (\ref{eq:prec_prior_not_norm}) reveals that it may be written as
\begin{align*}
    \exp\{-\eta_1\beta_1\}\int_{-\frac{\eta_1}{d-1}}^{\eta_1} (\eta_1-\eta_2)^{\frac{m(d-1)}{2}}(\eta_1 + (d-1)\eta_2)^{\frac{m}{2}}\exp\{- \eta_2\beta_2\}\partial\eta_2\\
    \times \frac{(\eta_1-\eta_2)^{\frac{m(d-1)}{2}}(\eta_1 + (d-1)\eta_2)^{\frac{m}{2}}\exp\{- \eta_2\beta_2\}}{\int_{-\frac{\eta_1}{d-1}}^{\eta_1} (\eta_1-\eta_2)^{\frac{m(d-1)}{2}}(\eta_1 + (d-1)\eta_2)^{\frac{m}{2}}\exp\{- \eta_2\beta_2\}\partial\eta_2},
\end{align*}
rendering the non-normalized joint density of $\boldsymbol\eta$ as the product of a (non-normalized) marginal density of $\eta_1$ and a (normalized) density of $\eta_2$ given $\eta_1$. The latter looks like a non-central Kummer-Beta density on the interval $(-\frac{\eta_1}{d-1}, \eta_1)$ with a parameterization $(\alpha,\beta,\gamma)$ satisfying
\begin{align*}
    \alpha-1 &= \frac{m}{2}\\
    \beta-1 &=\frac{m(d-1)}{2}\\
    -\frac{\lambda(d-1)}{d\eta_1} &= \beta_{2};
\end{align*}
i.e., $\eta_2 \ | \ \eta_1$ is non-central Kummer-Beta with
\begin{align*}
    a &= \frac{d\eta_1}{d-1}\\
    b &= \frac{-\eta_1}{d-1}\\
    \alpha &= \frac{m+2}{2}\\
    \beta &= \frac{m(d-1)+2}{2}\\
    \lambda &= \frac{\eta_1d\beta_2}{d-1}.
\end{align*}

Thus, by Lemma \ref{noncentral_kb} the integral is precisely
\begin{equation*}
\frac{(\frac{d\eta_1}{d-1})^{\frac{md+4}{2}-1}\text{B}(\frac{m+2}{2},\frac{m(d-1)+2}{2}){}_1F_1(\frac{m+2}{2},\frac{md+4}{2}, -\frac{\eta_1d\beta_2}{d-1})}{\exp\{-\frac{\eta_1\beta_2}{d-1}\}}.
\end{equation*}
If we ignore the terms constant with respect to $\eta_1$ and multiply by $\exp\{-\beta_1\eta_1\}$, we have that the marginal density of $\eta_1$ is proportional to
\begin{equation*}
    \eta_1^{\frac{md+4}{2}-1}\exp\bigg\{-\big(\beta_1-\frac{\beta_2}{d-1}\big)\eta_1\bigg\}{}_1F_1\bigg(\frac{m+2}{2},\frac{md+4}{2}, -\frac{\eta_1d\beta_2}{d-1}\bigg),
\end{equation*}
i.e., a convolved-gamma density with parameters
\begin{align*}
    \alpha &= \frac{m(d-1)+2}{2}\\
    \lambda &= \frac{m+2}{2}\\
    \beta &= \beta_1-\frac{\beta_2}{d-1}\\
    \delta &= -\frac{d\beta_2}{d-1}
\end{align*}
and normalization constant\footnote{Note Lemma \ref{trace_inequality} implies that $\delta < \beta$, resulting in a valid parameterization.} (after some simplification)
\begin{equation*}
     \frac{\Gamma(\frac{md+4}{2})(d-1)^{\frac{m(d-1)+2}{2}}}{((d-1)\beta_1-\beta_2)^{\frac{m(d-1)+2}{2}}(\beta_1+\beta_2)^{\frac{m+2}{2}}}.
\end{equation*}

Taking all this into account, the joint prior for $\boldsymbol\eta$ is
\begin{equation}
    Z_d(m,\beta_1,\beta_2)(\eta_1-\eta_2)^{\frac{m(d-1)}{2}} (\eta_1 + (d-1)\eta_2)^{\frac{m}{2}}\exp\{-\beta_1\eta_1-\beta_2\eta_2\}
\end{equation}
where
\begin{equation}
    Z_d(m,\beta_1,\beta_2) := \frac{(\beta_1+\beta_2)^{\frac{m+2}{2}} ((d-1)\beta_1-\beta_2)^{\frac{m(d-1)+2}{2}}}{d^{\frac{md+2}{2}} \Gamma(\frac{m(d-1)+2}{2})\Gamma(\frac{m+2}{2})},
\end{equation}
with positive support on $\mathcal C_d$, as desired.

The prior expectation of $\boldsymbol\eta$ is the subject of the following theorem.

\begin{theorem}\label{thm:distribution}
    If
    \begin{equation*}
        \eta_1 \sim C\Gamma\bigg(\frac{m(d-1)+2}{2}, \frac{m+2}{2}, \beta_1-\frac{\beta_2}{d-1}, -\frac{d\beta_2}{d-1}\bigg)
    \end{equation*}
    and
    \begin{equation*}
        \eta_2\ | \ \eta_1 \sim KB_{\frac{d\eta_1}{d-1}, -\frac{\eta_1}{d-1}}\bigg(\frac{m+2}{2}, \frac{m(d-1)+2}{2}, \frac{\eta_1 d \beta_2}{d-1}\bigg)
    \end{equation*}
    then $\boldsymbol\eta$ is exactly equal in distribution to $\textbf{C}_d \boldsymbol Y$
    where $\textbf{C}_d = \begin{bmatrix}
        1 &1\\
        1 &-\frac{1}{d-1}
    \end{bmatrix},\ Y_1\sim\Gamma(\frac{m+2}{2}, \beta_1+\beta_2),\ Y_2\sim\Gamma(\frac{m(d-1)+2}{2}, \beta_1-\frac{\beta_2}{d-1})$, and $Y_1$ and $Y_2$ are independent.
\end{theorem}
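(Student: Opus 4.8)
The plan is to verify that the hierarchically constructed $\boldsymbol\eta$ and the transformed pair $\textbf{C}_d\boldsymbol Y$ possess the same density on $\mathcal C_d$; equality in distribution follows immediately.

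First I would compute the joint density of the hierarchically defined $(\eta_1,\eta_2)$ as the product of the convolved-gamma marginal of $\eta_1$ with the shifted/scaled Kummer-Beta conditional of $\eta_2\mid\eta_1$ given by Lemma \ref{noncentral_kb}. Substituting $a=\frac{d\eta_1}{d-1}$ and $b=-\frac{\eta_1}{d-1}$ into that lemma, one checks that $a+b=\eta_1$, so the conditional is supported on $(-\frac{\eta_1}{d-1},\eta_1)$; combined with $\eta_1>0$ this already yields joint support exactly $\mathcal C_d$. The same substitution gives $\eta_2-b=\frac{\eta_1+(d-1)\eta_2}{d-1}$, $a+b-\eta_2=\eta_1-\eta_2$, $\frac{\lambda}{a}=\beta_2$, and $\frac{b}{a}=-\frac1d$. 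Multiplying by the $C\Gamma$ density of $\eta_1$, the crucial observation is that the confluent hypergeometric factor ${}_1F_1\big(\frac{m+2}{2},\frac{md+4}{2},-\frac{d\beta_2}{d-1}\eta_1\big)$ in the numerator of the $C\Gamma$ density is identical to the one appearing inside $Z(\cdot)$ in the denominator of the Kummer-Beta density, so the two cancel; likewise the factors $\eta_1^{\pm(md+2)/2}$ and the powers of $d-1$ cancel, while the exponential factors combine (the $\eta_1$ coefficient $-(\beta_1-\frac{\beta_2}{d-1})-\frac{\beta_2}{d-1}$ collapsing to $-\beta_1$). What remains is precisely $Z_d(m,\beta_1,\beta_2)(\eta_1-\eta_2)^{\frac{m(d-1)}{2}}(\eta_1+(d-1)\eta_2)^{\frac{m}{2}}\exp\{-\beta_1\eta_1-\beta_2\eta_2\}$, which is exactly the calculation already carried out in reverse in Appendix \ref{app:details}.

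Second I would evaluate the density of $\textbf{C}_d\boldsymbol Y$ via the change-of-variables performed in the proof of Theorem \ref{thm:cs_prec_prior}, now with $(\alpha_1,\lambda_1)=\big(\frac{m+2}{2},\beta_1+\beta_2\big)$ and $(\alpha_2,\lambda_2)=\big(\frac{m(d-1)+2}{2},\beta_1-\frac{\beta_2}{d-1}\big)$. One verifies that the exponent coefficients there simplify to $\frac{\lambda_1+(d-1)\lambda_2}{d}=\beta_1$ and $\big(\frac{d-1}{d}\big)(\lambda_1-\lambda_2)=\beta_2$, that the exponents on $(\eta_1-\eta_2)$ and $(\eta_1+(d-1)\eta_2)$ become $\frac{m(d-1)}{2}$ and $\frac m2$, and that the multiplicative constant $\frac{(d-1)^{\alpha_2}}{d^{\alpha_1+\alpha_2-1}}\cdot\frac{\lambda_1^{\alpha_1}\lambda_2^{\alpha_2}}{\Gamma(\alpha_1)\Gamma(\alpha_2)}$ reduces exactly to the $Z_d(m,\beta_1,\beta_2)$ of (\ref{eq:cs_prior_norm_cons}). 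Thus $\textbf{C}_d\boldsymbol Y$ has the same density as the hierarchically defined $\boldsymbol\eta$, and the two are equal in distribution.

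The argument is essentially bookkeeping; the only genuine care required is to keep the gamma shape/rate parameters distinct from the $C\Gamma$ and Kummer-Beta parameters sharing the same symbols, and to confirm that the arguments of the two ${}_1F_1$ factors truly coincide — this is exactly what makes the cancellation, and hence the identity, go through. One could shortcut the $\eta_1$-marginal half by noting that $\eta_1=Y_1+Y_2$ is a sum of two independent gammas and appealing directly to the defining representation of the convolved-gamma law from \cite{DiSalvo-2006}, but the conditional factor would still have to be identified by the computation above.
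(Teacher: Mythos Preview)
Your proposal is correct and complete: you compare the two joint densities on $\mathcal C_d$ directly and confirm they coincide. The ${}_1F_1$ cancellation you flag is precisely the mechanism that collapses the hierarchical product to the closed form $Z_d(m,\beta_1,\beta_2)(\eta_1-\eta_2)^{\frac{m(d-1)}{2}}(\eta_1+(d-1)\eta_2)^{\frac{m}{2}}\exp\{-\beta_1\eta_1-\beta_2\eta_2\}$.

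The paper takes a genuinely different route. For $\eta_1$ it does exactly the shortcut you mention at the end, invoking the DiSalvo convolution representation of $C\Gamma$ to get $\eta_1\stackrel d= Y_1+Y_2$. For $\eta_2$ it does \emph{not} work with the conditional density at all; instead it computes the marginal moment generating function $\mathbb E[e^{t\eta_2}]$ directly from the already-derived joint density, obtaining $Z_d(m,\beta_1,\beta_2)/Z_d(m,\beta_1,\beta_2-t)$, and factors this as the product of the MGF of $Y_1$ and the MGF of $-Y_2/(d-1)$. Your density comparison is more pedestrian but is airtight about \emph{joint} equality in distribution; the paper's MGF computation is slick but, read literally, only establishes the two marginal identities $\eta_1\stackrel d= Y_1+Y_2$ and $\eta_2\stackrel d= Y_1-\tfrac{Y_2}{d-1}$, which do not by themselves force the joint law --- the joint conclusion really rests on the density derivation in Appendix~\ref{app:details} together with Theorem~\ref{thm:cs_prec_prior}, which is essentially your argument. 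So your approach buys a self-contained proof of the joint statement, while the paper's buys an elegant identification of the $\eta_2$ marginal without ever touching ${}_1F_1$ or the Kummer--Beta normalizer.
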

\begin{proof}
    As $\eta_1\sim C\Gamma\big(\frac{m(d-1)+2}{2}, \frac{m+2}{2}, \beta_1-\frac{\beta_2}{d-1}, -\frac{d\beta_2}{d-1}\big)$, it may be written as the convolution of independent $\Gamma(\frac{m(d-1)+2}{2}, \beta_1-\frac{\beta_2}{d-1})$ and $\Gamma(\frac{m+2}{2}, \beta_1+\beta_2)$ random variables, since
    \begin{align*}
        \bigg(\beta_1-\frac{\beta_2}{d-1}\bigg)-\bigg(-\frac{d\beta_2}{d-1}\bigg) = \beta_1+\beta_2.
    \end{align*}
    Hence, $\eta_1$ equals in distribution the sum of the gamma variables given in the statement of the theorem.

    We shall compute the distribution of $\eta_2$ from its moment generating function. We have
    \begin{align*}
        \mathbb E&\lbrack e^{t\eta_2}\rbrack \\ &=\int_{\mathcal C_d} Z_d(m,\beta_1,\beta_2)(\eta_1-\eta_2)^{\frac{m(d-1)}{2}} (\eta_1 + (d-1)\eta_2)^{\frac{m}{2}}\exp\{-\beta_1\eta_1-\beta_2\eta_2\} \times\exp\{t\eta_2\}\partial\boldsymbol\eta\\
        &= Z_d(m,\beta_1,\beta_2) \int_{\mathcal C_d} (\eta_1-\eta_2)^{\frac{m(d-1)}{2}} (\eta_1 + (d-1)\eta_2)^{\frac{m}{2}}\exp\{-\beta_1\eta_1-(\beta_2-t)\eta_2\}\partial\boldsymbol\eta\\
        &= \frac{Z_d(m,\beta_1,\beta_2)}{Z_d(m,\beta_1,\beta_2-t)}
    \end{align*}
    provided that $t$ satisfies $-\beta_1 < \beta_2-t < (d-1)\beta_1$, i.e., $t\in \big(\beta_2-(d-1)\beta_1, \ \beta_2-\beta_1\big)$ an open interval in $\mathbb R$ containing 0. Thus
    \begin{align*}
        \mathbb E\lbrack e^{t\eta_2}\rbrack &= \frac{Z_d(m,\beta_1,\beta_2)}{Z_d(m,\beta_1,\beta_2-t)}\\
        &= \frac{(\beta_1+\beta_2)^{\frac{m+2}{2}} ((d-1)\beta_1-\beta_2)^{\frac{m(d-1)+2}{2}}}{(\beta_1+\beta_2-t)^{\frac{m+2}{2}} ((d-1)\beta_1-\beta_2+t)^{\frac{m(d-1)+2}{2}}}\\
        &=\bigg(\frac{\beta_1+\beta_2}{\beta_1+\beta_2-t}\bigg)^{\frac{m+2}{2}} \times \bigg(\frac{(d-1)\beta_1-\beta_2}{(d-1)\beta_1-\beta_2+t}\bigg)^{\frac{m(d-1)+2}{2}}.
    \end{align*}

    The first multiplicand in the final line is the moment generating function of $Y_1$ given in the theorem statement. The second multiplicand, after some rearrangement, is exactly
    \begin{equation*}
        \bigg(\frac{\beta_1-\frac{\beta_2}{d-1}}{\beta_1-\frac{\beta_2}{d-1}+\frac{t}{d-1}}\bigg)^{\frac{m(d-1)+2}{2}},
    \end{equation*}
    i.e., the moment generating function of $Y_2$ evaluated at $-\frac{t}{d-1}$ or, equivalently, the moment generating function of $-\frac{Y_2}{d-1}$.

    We now have that $\eta_1$ and $\eta_2$ are respectively equal in distribution to $Y_1+Y_2$ and $Y_1-\frac{Y_2}{d-1}$ which, after being expressed in matrix form, completes the result.
\end{proof}

\section{EM Algorithm for the Marginal Formulation of the Random Intercept Model}\label{app:EM_for_RI}

We suggested above that one may potentially initialize the Gibbs sampler for the test of positivity of a common within-class correlation at the terminal values of an expecation-maximization (EM) algorithm. While such methods for the conditional formulation of the random-intercept model have been thoroughly discussed in the literature (e.g, \cite{Kato-2004, Ippel-2009, Suqin-2014}), we could not find any for the marginal formulation (\ref{marginal_model}). For the sake of thoroughness we detail here how such a method may be derived.

Suppose we are in the marginal characterization of the random-intercept model, i.e., with vectors $\boldsymbol X_{\cdot j}, \ j = 1,\dots, J$ of varying lengths $d_j$ (with sum $D$ and maximum length $d_\text{max}$) arranged into
\begin{equation*}
    \boldsymbol X = \begin{bmatrix}
        \boldsymbol X_{\cdot 1}\\
        \vdots\\
        \boldsymbol X_{\cdot J}
    \end{bmatrix} \sim \mathcal N(\mu\boldsymbol 1_D, \boldsymbol\Sigma^*)
\end{equation*}
in which
\begin{equation*}
    \boldsymbol\Sigma^* = \bigoplus_{j=1}^J \big((\sigma_1-\sigma_2) \textbf I_{d_j} + \sigma_2\boldsymbol 1_{d_j}\boldsymbol 1_{d_j}^\top\big)
\end{equation*}
with $\mu \in \mathbbm R$ and $\boldsymbol\sigma \in \mathcal C_{d_\text{max}}$. We shall take a frequentist point of view and assume that no prior has been placed on the parameters.

Any EM algorithm begins with a complete-data extension of the model which, when completely observed, would render usual maximum likelihood estimation straightforward. Following initialization of the parameter estimate, the algorithm alternates between an E-step, which computes the expectation under the current parameter of the complete-data log-likelihood conditional on the observed data, and an M-step, in which the parameters are optimized by maximizing the resulting expression. For the particular case in which the complete-data model is a full-rank exponential family, the E-step only involves computing conditional expectations of the complete-data sufficient statistics; and the M-step performs maximum likelihood estimation supposing the said conditional expectation had been observed (\cite{DLR}, Section 2).

Since maximum likelihood estimation of $(\mu, \boldsymbol\sigma)$ can be performed when the group sizes are all equal, we propose extending each group observation in the marginal model with additional $\boldsymbol Y_{\cdot j}$ of length $d_{\text{max}}-d_j$ such that
\begin{equation}\label{complete_marginal_model}
    \begin{bmatrix}
        \boldsymbol X_{\cdot j}\\
        \boldsymbol Y_{\cdot j}
    \end{bmatrix}
    \overset{i.i.d.}{\sim} \mathcal N(\mu\boldsymbol1_{d_{\text{max}}}, \boldsymbol\Sigma), \ j = 1,\dots, J
\end{equation}
where $\boldsymbol\Sigma = (\sigma_1-\sigma_2)\textbf I_{d_\text{max}} + \sigma_2\boldsymbol 1_{d_\text{max}}\boldsymbol 1_{d_\text{max}}^\top$. This is a fully rank-3 exponential family, with sufficient statistics
\begin{align*}
    T_\mu(\boldsymbol X, \boldsymbol Y) &=\sum_{j=1}^J \bigg(\sum_{i=1}^{d_j} X_{ij} + \sum_{i=d_j+1}^{d_{\text{max}}} Y_{ij}\bigg)\\
    \boldsymbol T_{\boldsymbol\sigma}(\boldsymbol X, \boldsymbol Y) &=\sum_{j=1}^J\begin{bmatrix}
        \boldsymbol X_{\cdot j} \boldsymbol X_{\cdot j}^\top &\boldsymbol X_{\cdot j} \boldsymbol Y_{\cdot j}^\top\\
        \boldsymbol Y_{\cdot j} \boldsymbol X_{\cdot j}^\top &\boldsymbol Y_{\cdot j} \boldsymbol Y_{\cdot j}^\top
    \end{bmatrix}.
\end{align*}
Let $\boldsymbol\theta^* = (\mu^*,\boldsymbol\sigma^*)$ be the most recent iterate of the parameters; since each $(\boldsymbol X_{\cdot j}, \boldsymbol Y_{\cdot j})$ is jointly normally distributed under $\boldsymbol\theta^*$, conditional on $\boldsymbol X = \boldsymbol x$, the $\boldsymbol Y_{\cdot j}$ are independent of each other and respectively normally distributed in $\mathbbm R^{d_{\text{max}}-d_j}$ (\cite{Bic-Dok}, Theorem B.6.5); we eventually obtain
\begin{equation}\label{conditional_dist}
    \boldsymbol Y_{\cdot j} \ | \ \boldsymbol X_{\cdot j} = \boldsymbol x_{\cdot j} \sim \mathcal N(\mu^*_j\boldsymbol1_{d_\text{max}-d_j}, \boldsymbol\Sigma^*_j)
\end{equation}
where
\begin{align*}
    \mu_{j}^* &= \frac{\mu^*(\sigma_1^*-\sigma_2^*)+\sigma_2^*\sum_{i=1}^{d_j}x_{ij}}{\sigma_1^*+(d_j-1)\sigma_2^*}\\
    \boldsymbol\Sigma_{j}^* &= (\sigma_1^*-\sigma_2^*)\textbf I_{d_{\text{max}}-d_j}+\frac{\sigma_2^*\boldsymbol1_{d_\text{max}-d_j}\boldsymbol1^\top_{d_\text{max}-d_j}}{\sigma_1^*+(d_j-1)\sigma_2^*}.
\end{align*}
To compute
\begin{align}
    T_{\mu}^* &= \mathbbm E_{\boldsymbol\theta^*}\lbrack T_{\mu}(\boldsymbol X,\boldsymbol Y)\ | \ \boldsymbol X=\boldsymbol x\rbrack\\
    \boldsymbol T_{\boldsymbol\sigma}^* &= \mathbbm E_{\boldsymbol\theta^*}\lbrack\boldsymbol T_{\boldsymbol\sigma}(\boldsymbol X,\boldsymbol Y)\ | \ \boldsymbol X=\boldsymbol x\rbrack
\end{align}
for the E-step, we find that we must only compute at most three conditional expectations for each group of size smaller than $d_{\text{max}}$, those of $Y_{ij}$, $Y_{ij}^2$, and $Y_{ij}Y_{kj}$ with $i\neq k$. Each of these terms may be easily gleaned from $\mu_j^*$ and $\boldsymbol\Sigma_j^*$ above:
\begin{align*}
    Y_j^* &:= \mathbbm E_{\boldsymbol\theta^*}\lbrack Y_{ij} \ | \ \boldsymbol X_{\cdot j} = \boldsymbol x_{\cdot j}\rbrack = \mu_j^*\\
    (Y_j^2)^* &:= \mathbbm E_{\boldsymbol\theta^*}\lbrack Y_{ij}^2 \ | \ \boldsymbol X_{\cdot j} = \boldsymbol x_{\cdot j}\rbrack = \frac{(\sigma_1^*-\sigma_2^*)^2 + (d_j\sigma_1^*+1)\sigma_2^*}{\sigma_1^*+(d_j-1)\sigma_2^*} + (\mu_j^*)^2\\
    (Y_jY_j')^* &:=\mathbbm E_{\boldsymbol\theta^*}\lbrack Y_{ij}Y_{kj} \ | \ \boldsymbol X_{\cdot j} =\boldsymbol x_{\cdot j}\rbrack=  \frac{\sigma_2^*}{\sigma_1^*+(d_j-1)\sigma_2^*} + (\mu_j^*)^2.
\end{align*}

The M-step only consists of performing maximum likelihood estimation having observed the sufficient statistics $T_{\mu}^*$ and $\boldsymbol T_{\boldsymbol\sigma}^*$ to update the parameter estimates. In doing so, we eventually obtain $\boldsymbol{\hat\theta} = (\hat\mu, \boldsymbol{\hat\sigma})$ where
\begin{align*}
    \hat\mu &= \frac{T^*_{\mu}}{J d_{\text{max}}}\\
    \hat\sigma_1 &= \frac{\text{tr}(\boldsymbol T_{\boldsymbol\sigma}^*)-Jd_{\text{max}}\hat\mu^2}{d_{\text{max}}(J-1)}\\
    \hat\sigma_2 &= \frac{\boldsymbol1^\top_{d_\text{max}}\boldsymbol T_{\boldsymbol\sigma}^*\boldsymbol1_{d_\text{max}}-\text{tr}(\boldsymbol T_{\boldsymbol\sigma}^*)-d_\text{max}(d_\text{max}-1)J\hat\mu^2}{d_\text{max}(d_\text{max}-1)(J-1)}.
\end{align*}
We alternate between the E- and M-steps until some convergence criterion has been achieved, e.g., the increases to the log-likelihood fall beneath a specified threshold. We propose a simple initialization for this procedure. Simply take $\mu_0 = \bar x, \sigma_{20} = 0$, and $\sigma_{10}$ as the average sample variance of all the $\boldsymbol x_{\cdot j}$ of length $d_{\text{max}}$.

\end{document}